\def\R{\mathbb{R}}
\def\C{\mathbb{C}}
\def\dfrac{\displaystyle\frac}
\def\ep{\varepsilon}
\def\kpara{\boldsymbol{k}}
\def\hom{\hat\omega}
\newtheorem{proposition}{Proposition}[section]
\begin{document}

\title{Artificial magnetism and magnetoelectric coupling from dielectric layers}

\author{Yan Liu, S\'ebastien Guenneau and Boris Gralak}
\address{Institut Fresnel, CNRS, Aix-Marseille Universit\'e, Ecole centrale Marseille, Campus de
Saint-J\'er\^ome, 13013 Marseille, France}

\label{firstpage}

\begin{abstract}
We investigate a high-order homogenization (HOH) algorithm for periodic multilayered stacks. The mathematical tool of choice is a transfer matrix method. Expressions for effective permeability,
permittivity and magnetoelectric coupling are explored by frequency power expansions.
On the physical side, this high-order homogenization uncovers a magnetoelectric coupling effect (odd order approximation) and artificial magnetism (even order approximation) in moderate contrast photonic crystals. Comparing the effective parameters' expressions of a stack with three layers against that of a stack with two layers, we note that the magnetoelectric coupling effect vanishes while the artificial magnetism can still be achieved in a center symmetric periodic structure. Furthermore, we numerically check the effective parameters through the dispersion law and transmission curves of a stack with two dielectric layers against that of an effective bianisotropic medium: they present a good agreement in the low frequency (acoustic) band until the first stop band, where the analyticity of the logarithm function of transfer matrix ($\log\{T\}$) breaks down.
\end{abstract}

\maketitle

\section{Introduction}

There is a vast amount of literature in the homogenization of periodic structures with
the classical effect of artificial anisotropy. A less well-known effect is artificial
magnetism and low frequency stop bands through averaging processes in high-contrast periodic structures
\cite{Brien02,Chered06,Felbacq04,Zhikov00}.

In layman terms, O'Brien and Pendry observed in 2002 that rather than using the LC-resonance of conducting split ring resonators to achieve a negative permeability as an average quantity \cite{Pendry99}, one can design periodic structures displaying some Mie resonance,
e.g. by considering a cubic array of dielectric spheres of high-refractive index \cite{Brien02}. These resonances give rise to the
heavy-photon bands in photonic crystals which are responsible for a low frequency stop band corresponding to a range of frequencies wherein the
effective permeability takes extreme values \cite{Felbacq04}. So-called high-contrast homogenization \cite{Zhikov00} predicts this effect which is associated with the lack of a lower bound for a frequency dependent effective parameter deduced from a spectral problem reminiscent of Helmholtz resonators in mechanics.

In this paper, we would like to achieve such a magnetic activity without a high-contrast material. The route we propose is based upon a homogenization approach for high-frequencies i.e. when the period of a multilayered structure approaches the wavelength of optical wave. The extension of classical homogenization theory \cite{Bensoussan78,Zhikov94,Milton02} to high frequencies is of pressing importance for physicists working in the emerging field of metamaterials, but applied mathematicians also show a keen interest in this topic \cite{Chered06,Craster10,Felbacq05}, where the periodic structures at sub-wavelength scales ($\lambda/10$ to $\lambda/6$) \cite{Brien02,Pendry99} can clearly be regarded as almost homogeneous.

The tool of choice for our one-dimensional model is the transfer matrix method, which allows for analytical formulae as shown in Section \ref{sec:math}, a high-order homogenization (HOH) method is proposed wherein Baker-Campbell-Hausdorff formula (BCH, an extension of Sophus Lie theorem) was implemented; we stress that ideas contained therein can be extended to two and three-dimensional periodic structures, at least for simple geometries, such as woodpile structures, of particular interest in photonics, see \cite{Gralak03}. Importantly, we not only achieve magnetic activity in moderate contrast dielectric structures, but also unveil some artificial bianisotropy in Section \ref{sec:hoh}, where a multilayered stack with an alternation of two dielectric layers is considered. As proposed by Pendry, the bianisotropy is yet another route towards negative refraction \cite{Pendry04}, however, it is noted that the artificial bianisotropy vanishes in a periodic stack with center symmetry, where an extension of
the HOH method applied to a stack with $m(\geq 3)$ alternative layers is explored in Section \ref{sec:bch}. Furthermore, a correction factor is investigated in Section \ref{sec:corc} to estimate the asymptotic error, which is proportional to $1/n^p$ with $p$ the approximation order. We then numerically check the effective parameters through the dispersion law and transmission curves between the multilayers and the effective medium in Section \ref{sec:num}: A good agreement in the low frequency band up to the first stop band verifies the equivalence of the two structures. We finally take a frequency power expansion of the transfer matrix, which is analytic in the whole complex plane in Section \ref{sec:Tseries}, and draw some conclusions in Section \ref{sec:ccs}.
\section{Mathematical setup of the problem} \label{sec:math}

Figure \ref{figure1}(a) shows a schematic diagram of periodic multilayered stack with a unit cell made of two homogeneous dielectric layers $\mathcal{L}_1$ and $\mathcal{L}_2$ of respective thicknesses $h_1$ and $h_2$ ($h=h_1+h_2=\tilde{h}_1/n+\tilde{h}_2/n$), where $n$ is the number of the unit cells in the stack. The permittivities of the two layers are assumed as $\ep_1$, $\ep_2$, and the permeabilities are $\mu_1=\mu_2=\mu_0$. It should be noted that the whole thickness of the stack $\tilde{h}(=nh)$ is a constant and comparable with the wavelength of light as $\tilde{h}/\lambda \approx 1$. Starting from the case when $n=1$, the stack is a most simple structure consisting of only two dielectric layers, the effective medium theory \cite{Brekhov80,Born80,Yariv84} cannot be applied when $h/\lambda \approx 1$. However, if we increase $n$ to a large enough constant, then the system contains $n$ times smaller unit cells, the thickness of which will be much smaller than the wavelength of light (e.g. $h/\lambda \ll 1$). Hence a homogeneous medium with permittivity $\ep_{\rm eff}$, permeability $\mu_{\rm eff}$ and bianisotropy $K_{\rm eff}$ shown in figure \ref{figure1}(b) can be assumed to behave as an effective medium for such a multilayer. The approximation of the multilayer by the effective medium will be more accurate for large $n$, a fact which will be proved in the following sections.
\begin{figure}[!htb]
    \centering
    \includegraphics[width=0.9\textwidth]{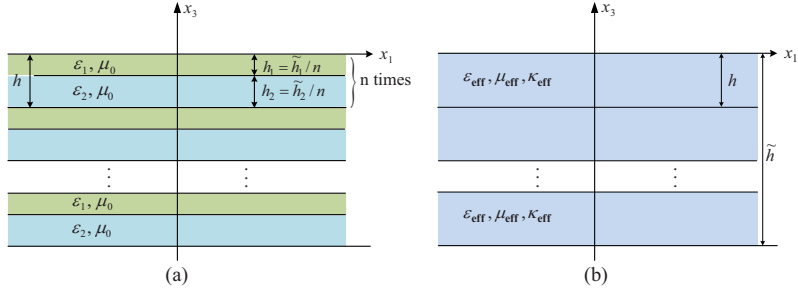}
    \caption{(a) Schematic diagram of a multilayered stack consisting of an alternation of two homogeneous dielectric layers of permittivities $\varepsilon_1$, $\varepsilon_2$ and thicknesses $h_1=\tilde{h}_1/n$, $h_2=\tilde{h}_1/n$, with $n$ the total number of the unit cells. The whole thickness of the stack is denoted by $\tilde{h}=nh=n(h_1+h_2)$, which is comparable with wavelength by $\tilde{h}/\lambda \approx 1$, when $n$ is large enough, then $\tilde{h}/(n\lambda) = h/\lambda \ll 1$, in other words, $\omega h/(2\pi c_0)\ll 1$ with $c_0$ the velocity of light in vacuum. (b) An effective medium described by anisotropic tensors of permittivity, permeability and bianisotropy (i.e. a metamaterial with artificial bianisotropy and magnetism), the thickness is $\tilde{h}$.}
\label{figure1}
\end{figure}
\subsection{Time-harmonic Maxwell's equations}

At the oscillating frequency $\omega$, the electric and magnetic fields $\bf{E}$ and $\bf{H}$ are related to the electric and magnetic inductions $\bf{D}$ and $\bf{B}$ through the time-harmonic Maxwell's equations,
\begin{equation}
\boldsymbol{\nabla} \times {\bf H} ({\bf x})= - i \omega {\bf D} ({\bf x}) \, , \quad
\boldsymbol{\nabla} \times {\bf E} ({\bf x})= i \omega {\bf B} ({\bf x}) \, ,
\label{Maxwell}
\end{equation}
and the constitutive relations for non-magnetic isotropic dielectric media,
\begin{equation}
{\bf D}({\bf x}) = \varepsilon_m {\bf E} ({\bf x})\, , \quad {\bf B}({\bf x}) = {\bf H}({\bf x})  \, , \quad {\bf x} \, \in \, \mathcal{L}_m \, ,
\label{M0}
\end{equation}
where $\ep_m$ the permittivity in the $m^{th}$ homogeneous layer located in the domain $\mathcal{L}_m$ of $\mathbb{R}^3$.

Then a Fourier decomposition is introduced for both electric and magnetic fields as
\begin{equation}
\label{Fourier}
\begin{array}{l}
\widehat{\bf E}(k_1,k_2,x_3) = \dfrac{1}{2 \pi} \displaystyle\int_{\mathbb{R}^2} {\bf E}(x_1,x_2,x_3) \exp\big[- i (k_1 x_1 + k_2 x_2)] \, dx_1 dx_2 \, , \\[4mm]
\widehat{\bf H}(k_1,k_2,x_3) = \dfrac{1}{2 \pi} \displaystyle\int_{\mathbb{R}^2} {\bf H}(x_1,x_2,x_3) \exp\big[- i (k_1 x_1 + k_2 x_2)] \, dx_1 dx_2 \, .
\end{array}
\end{equation}
with $k_1$, $k_2$ the projections of wave vector ${\bf k}$ on $x_1$, $x_2$ axes, respectively, where an oblique polarizable plane wave with wave vector ${\bf k}=k_1 x_1+ k_2 x_2 +k_3 x_3$ is considered ($x_3$ axis is perpendicular to the layers).

Applying the decomposition (\ref{Fourier}) to equations (\ref{Maxwell}) and (\ref{M0}), we derive an ordinary differential equation (involving $4\times 4$-matrices and a 4-components column vector) \cite{Lakhtakia92}
\begin{equation}
\dfrac{\partial {\widehat{\text{F}}}}{\partial x_3} \, (\omega,k_1,k_2,x_3)
= i {\text{M}}_m(\omega, k_1,k_2) \, {\widehat{\text{F}}} (\omega,k_1,k_2,x_3) \, .
\label{dFdx3}
\end{equation}
where $\widehat{\text{F}}$ is a column vector containing the tangential components of the Fourier-transformed electromagnetic field $(\widehat{\bf E},\widehat{\bf H})$, i.e. the components along $x_1$ and $x_2$ axes.

Here, in order to apply a much simpler notation for subsequent calculation, we would like to define a new set of coordinates denoted by ($x_\parallel$, $x_\perp$, $x_3$): where the component $x_\parallel$ is along the direction of wave vector ${\boldsymbol k}=(k_1,k_2)$, $x_\perp$ is along ${\boldsymbol k}'=(-k_2,k_1)$ which is perpendicular to $x_\parallel$. In other words, the new set of coordinates is a rotation of the previous coordinates around the $x_3$ axis. For every vector $\bf{x}$, the change of coordinates from ($x_1$, $x_2$, $x_3$) to ($x_\parallel$, $x_\perp$, $x_3$) can be expressed by
\begin{equation}
  \left[ \begin{array}{l}
  x_\parallel \\[2mm]
  x_\perp
  \end{array}\right]= \dfrac{1}{\sqrt{k_1^2+k_2^2}}\left[
  \begin{array}{cc}
  k_1 & k_2 \\[2mm]
  -k_2 & k_1
  \end{array}\right]
  \left[ \begin{array}{l}
  x_1\\[2mm]
  x_2
  \end{array}\right].
\label{newcd}
\end{equation}
Note that, thanks to the symmetry of the geometry, the parameters of the multilayer are invariant under this transformation \cite{ward96}.

Hence, (\ref{dFdx3}) can be recast in the new coordinate system as
\begin{equation}
\dfrac{\partial {\widehat{\text{F}}}}{\partial x_3} \, (\omega,\kpara,x_3)
= i M_m(\omega, \kpara) \,{\widehat{\text{F}}} (\omega,\kpara,x_3) \, ,
\label{newdFdx3}
\end{equation}
with the column vectors
\begin{equation}\label{F}
{\widehat{\text{F}}} = \left[ \begin{array}{c} {\widehat{\text{F}}}_\parallel \\[2mm] {\widehat{\text{F}}}_\perp \end{array} \right] \, , \quad
{\widehat{\text{F}}}_\parallel = \left[ \begin{array}{c} \widehat{\text{E}}_\parallel \\[2mm] \widehat{\text{H}}_\parallel \end{array} \right] \, , \quad
{\widehat{\text{F}}}_\perp = \left[ \begin{array}{c} \widehat{\text{E}}_\perp \\[2mm] \widehat{\text{H}}_\perp \end{array} \right]  \, ,
\end{equation}
where $\widehat{\text{E}}_\parallel$, $\widehat{{\text H}}_\parallel$ (respectively $\widehat{\text{E}}_\perp$, $\widehat{\text{H}}_\perp$) are the components of the electric and magnetic fields along the $x_\parallel$ axis (respectively the $x_\perp$ axis).

Correspondingly, the matrix $M_m$ is a 4 by 4 matrix, the components of which can be expressed as
\begin{equation}
M_m(\omega,{\boldsymbol k}) = \omega \left[ \begin{array}{cc} \vspace*{1mm}
0 & \sigma_m + ({\boldsymbol k}^2 / \omega^2) \sigma_m^{-1} \\
- \sigma_m & 0 \end{array} \right] , \quad \sigma_m = \left[ \begin{array}{cc} \vspace*{1mm}
0 & \mu_0\\
- \varepsilon_m & 0 \end{array} \right] \,
\label{Mm}
\end{equation}
where $\kpara=\omega {\hat {\it k}}$ is the projection of wave vector ${\bf k}$ on $x_\parallel$, and $\hat {\it k}=\sin\theta_i$, with $\theta_i$ the incident angle.

The matrix $M_m(\omega,{\boldsymbol k})$ is independent of $x_3$ in each homogeneous layer, the solution of the equation (\ref{newdFdx3}) in the layer $\mathcal{L}_m$ is simply
\begin{equation}
{\widehat{\text{F}}} (\omega,{\boldsymbol k},x_3+h_m) = \exp\big[ i M_m(\omega,{\boldsymbol k}) h_m \big] \cdot {\widehat{\text{F}}} (\omega,{\boldsymbol k},x_3) \,
\label{solution}
\end{equation}
The exponential above is well-defined as a power series of matrix $M_m(\omega,{\boldsymbol k})$, and defines the transfer matrix in the $m^{th}$ layer of thickness $h_m$. Since this power series has infinite radius of convergence, the transfer matrix
\begin{equation}
  T_m(\omega,{\boldsymbol k})=\exp[iM_m h_m]
\end{equation}
is analytic with respect to the three independent variables $\omega$, $k_1$ and $k_2$. For an arbitrary permittivity profile (with the classical assumption of upper and lower bounds of permittivity greater than $\varepsilon_0$ uniformly in position ${\bf x}$ and number of layers $n$), analyticity is proved using a Dyson expansion \cite{Reed75}.
\subsection{Main homogenization result}
From physical considerations, for the effective medium in figure \ref{figure1}(b), we postulate that the homogenized constitutive equations emerging from the asymptotic limit $n\rightarrow +\infty$ in the sequence of equations (\ref{M0}) in the new coordinate system are
\begin{equation}
\begin{array}{ccc}
{\bf D}_\text{\rm eff}({\boldsymbol k},x_3)& = \varepsilon_\text{\rm eff}(\omega,{\boldsymbol k}) {\bf E}_\text{\rm eff}({\boldsymbol k},x_3)
+i K_\text{\rm eff}(\omega,{\boldsymbol k}) \, J  {\bf H}_\text{\rm eff}({\boldsymbol k},x_3) \\[2mm]
{\bf B}_\text{\rm eff}({\boldsymbol k},x_3) &=  \mu_\text{\rm eff}(\omega,{\boldsymbol k})  {\bf H}_\text{\rm eff}({\boldsymbol k},x_3)+  i J K_\text{\rm eff}(\omega,{\boldsymbol k})  {\bf E}_\text{\rm eff}({\boldsymbol k},x_3)
\end{array}
\label{coneff}
\end{equation}
where $\varepsilon_\text{\rm eff}$, $\mu_\text{\rm eff}$ are tensors of rank two which represent respectively the (anisotropic) effective
permittivity, permeability
\begin{equation}
\varepsilon_\text{\rm eff} =
\left[ \begin{array}{ccc} \vspace*{1mm}
\varepsilon_\parallel & 0 & 0 \\
0 & \varepsilon_\perp & 0 \\
0 & 0 & \varepsilon_3
\end{array}
\right] , \quad
\mu_\text{\rm eff} =
\left[ \begin{array}{ccc} \vspace*{1mm}
\mu_\parallel & 0 & 0 \\
0 & \mu_\perp & 0 \\
0 & 0 & \mu_3
\end{array}
\right],
\label{peff}
\end{equation}
matrix $J$ corresponds to the 90 degrees rotation around the $x_3$ axis, and $K_{\rm eff}$ is the bianisotropic parameter measuring the magnetoelectric coupling effect
\begin{equation}
J =\left[ \begin{array}{ccc} \vspace*{1mm}
0 & -1 & 0 \\
1 & 0 & 0 \\
0 & 0 & 1
\end{array}
\right] , \quad
K_\text{\rm eff} =
\left[ \begin{array}{ccc} \vspace*{1mm}
K_\parallel & 0 & 0 \\
0 & K_\perp & 0 \\
0 & 0 & 0
\end{array}
\right] .
\label{JK}
\end{equation}
Now, applying equation (\ref{Fourier}) to (\ref{Maxwell}) and (\ref{coneff}), we obtain
\begin{equation}
\dfrac{\partial {\widehat{\text{F}}}}{\partial x_3} \, (\omega,{\boldsymbol k},x_3)
= i M_{\rm eff}(\omega, {\boldsymbol k}) \,{\widehat{\text{F}}} (\omega,{\boldsymbol k},x_3) \, ,
\label{dFdxeff}
\end{equation}
with matrix
\begin{equation}
 M_\text{eff}(\omega, {\boldsymbol k}) =
 \omega \left[ \begin{array}{cc}
 -i \sigma'_K & \sigma_\perp + ({\boldsymbol k}^2/\omega^2) \sigma_3^{-1} \\
 -\sigma_\parallel & i \sigma_K
 \end{array} \right] \, ,
\label{Meff}
\end{equation}
where the $2 \times 2$ blocs are defined by
\begin{equation}
\sigma_\parallel = \left[ \begin{array}{cc} 0 &  \mu_\parallel \\
- \varepsilon_\parallel & 0 \end{array} \right] , \quad
\sigma_\perp = \left[ \begin{array}{cc} 0 & \mu_\perp \\
- \varepsilon_\perp & 0 \end{array} \right] , \quad
\sigma_3 = \left[ \begin{array}{cc} 0 &  \mu_3 \\
- \varepsilon_3 & 0 \end{array} \right]
\label{sigmaeff}
\end{equation}
and
\begin{equation}
\sigma_K = \left[ \begin{array}{cc}
K_\perp & 0 \\
0 &  -K_\parallel  \end{array} \right] , \quad
\sigma'_K = \left[ \begin{array}{cc}
-K_\parallel & 0 \\
0 &  K_\perp \end{array} \right] .
\label{Keff}
\end{equation}
These parameters are all unknowns at this stage which we would like to derive from a homogenization algorithm.
The transfer matrix is correspondingly,
\begin{equation}
  T_{\rm eff}(\omega, {\boldsymbol k})=\exp[iM_{\rm eff} h].
\label{Teff}
\end{equation}

\section{High-order homogenization (HOH) algorithm for multilayered stack} \label{sec:hoh}
Since we have derived the transfer matrix for the multilayered stack and postulated its structure for the effective medium in the previous section, it follows that the description of the homogenization procedure shown in figure \ref{figure1} can be expressed as
\begin{equation}
\exp[i M_2 h_2] \exp[i M_1 h_1]
= \exp[i M_\text{eff} h] \, .
\label{eqn}
\end{equation}
This means the two structures should present a same transmission property, where $M_{\rm eff}$ is the unknown to be calculated. Note that the left side of equation (\ref{eqn}) is a product of two exponential functions, which can be approximated by introducing the Baker-Campbell-Hausdorff (BCH) formula (an extension of Sophus Lie theorem, see \cite{Weiss62}). In mathematics, the BCH formula is concerned with
\begin{equation}
\exp[Z] = \exp \big[ A_1 \big] \exp \big[ A_2 \big]
\end{equation}
with $A_1$ and $A_2$ square matrices. An analytical expression for $Z$ is
\begin{align}
  Z&=\log(\exp[A_1]\exp[A_2]) \notag \\
  &= A_1 + A_2 + \dfrac{1}{2}\llbracket A_1, A_2 \rrbracket + \dfrac{1}{12}\llbracket  A_1 , \llbracket A_1, A_2 \rrbracket \rrbracket - \dfrac{1}{12} \llbracket  A_2 , \llbracket A_1, A_2 \rrbracket \rrbracket + \cdots  \,
\label{BCH}
\end{align}
where $\llbracket A_1, A_2 \rrbracket = A_1 A_2 - A_2 A_1$ is the commutator of $A_1$ and $A_2$, the product of which is noncommutative with $\llbracket A_2, A_1 \rrbracket=-\llbracket A_1, A_2 \rrbracket$.
Here, we would like to denote $A_1 + A_2$ in (\ref{BCH}) as the zeroth order approximation for Z, which corresponds to the classical homogenization; $\llbracket A_1, A_2 \rrbracket/2$ the first order, $\llbracket  A_1 , \llbracket A_1, A_2 \rrbracket \rrbracket / 12 - \llbracket  A_2 , \llbracket A_1, A_2 \rrbracket \rrbracket/12$ the second order approximation, and so on.

\noindent From (\ref{BCH}) and (\ref{eqn}), we have
\begin{align}
i M_\text{eff} h &= i (M_2 h_2 + M_1 h_1) + \dfrac{1}{2}\llbracket i M_2 h_2, i M_1 h_1\rrbracket \notag \\[-2mm]
& + \dfrac{1}{12} \llbracket i M_2 h_2 , \llbracket iM_2 h_2 , iM_1 h_1 \rrbracket \rrbracket - \dfrac{1}{12} \llbracket i M_1 h_1, \llbracket i M_2 h_2 , i M_1 h_1\rrbracket \rrbracket + \cdots
\label{approxMeff}
\end{align}
Furthermore, the expressions for the effective parameters in (\ref{peff}) and (\ref{JK}) can be derived by comparing the two matrices in the left and right hand sides of (\ref{approxMeff}).

First, we consider the zeroth order approximation in (\ref{approxMeff}), it yields $M_\text{eff} \approx M_1 f_1 + M_2 f_1$ with the filling fractions $f_1(=h_1/h)$ and $f_2(=h_2/h)$, respectively, and the effective parameters are
\begin{equation}
\begin{array}{c}
\varepsilon_\parallel = \varepsilon_\perp = \varepsilon_1 f_1 + \varepsilon_2 f_2 \, , \quad
\varepsilon_3^{-1} = \varepsilon_1^{-1} f_1 + \varepsilon_2^{-1} f_2 \,   \\[2mm]
\mu_\parallel = \mu_\perp = \mu_3 = \mu_0 \, ,\quad  K_\parallel = K_\perp = 0 \, .
\end{array}
\label{zeroorder}
\end{equation}
They are identical to the effective permittivities presented in \cite{Born80,Yariv84,Thornburg57,Raguin93,Guenneau00} by classical homogenization: the effective permittivity, permeability are equal to the average of two dielectric layers, while the bianisotropy is zero.

If we go further by taking the first order approximation, we obtain
\begin{equation}
M_\text{eff} \approx M_1 f_1 + M_2 f_2 + \dfrac{ih}{2} \llbracket M_2 f_2, M_1 f_1\rrbracket .
\end{equation}
Since both $M_1$ and $M_2$ are off-diagonal matrices, then their commutator leads to a diagonal matrix, the components of which correspond to those of $M_{\rm eff}$ in (\ref{Meff}), i.e.
\begin{equation}
\dfrac{ih}{2}\llbracket M_2 f_2, M_1 f_1 \rrbracket \, = \omega \left[ \begin{array} {cc}
- i\sigma'_K & 0 \\ 0 & i\sigma_K
\end{array} \right]
\label{commutator}
\end{equation}
where
\begin{equation}
\sigma_K = \omega h  \dfrac{\varepsilon_1 - \varepsilon_2}{2} f_1 f_2
\left[ \begin{array} {cc}
\mu_0& 0 \\ 0 & - \mu_0 + \dfrac{{\boldsymbol k}^2}{\omega^2}\dfrac{\varepsilon_1 + \varepsilon_2}{\varepsilon_1 \varepsilon_2} \end{array} \right]
\label{sigmaxi}
\end{equation}
provides the first order correction to the leading order approximation (classical homogenization) in (\ref{zeroorder}). This first order correction is encompassed in the following bianisotropic parameter
\begin{equation}
\begin{array}{c}
  K_\perp(\omega, \kpara)= \dfrac{\omega h}{2}\mu_0(\ep_1-\ep_2) f_1f_2 , \\
  K_\parallel(\omega,\kpara)=\dfrac{\omega h}{2}\mu_0(\ep_1-\ep_2) f_1f_2 \left[1-\dfrac{{\boldsymbol k}^2}{{ \omega}^2}\dfrac{\ep_1+\ep_2}{\mu_0\ep_1\ep_2}\right].
\end{array}
\label{K}
\end{equation}
\noindent Notice that $K$ is not only frequency dependent but also exhibits spatial dispersion. It leads to $K_\parallel \neq K_\perp$ when $\kpara\neq0$.

Furthermore, if we consider the second order correction, a term with "double commutator" will appear in the asymptotic expansion
\begin{align}
M_\text{eff} & \approx (M_2 f_2 + M_1 f_1) -\dfrac{ih}{2} \llbracket M_2 f_2, M_1 f_1 \rrbracket \notag \\[-2mm]
&+\dfrac{h^2}{12} \llbracket M_1 f_1 , \llbracket  M_2 f_2 , M_1 f_1 \rrbracket \rrbracket - \dfrac{h^2}{12} \llbracket M_2 f_2 , \llbracket  M_2 f_2 , M_1 f_1 \rrbracket \rrbracket,
\label{second}
\end{align}
with commutator of $M_1$ and $M_2$ given in (\ref{commutator}), and double commutator
\begin{equation}
\llbracket M_1f_1, \llbracket M_2 f_2, M_1 f_1 \rrbracket \rrbracket = \dfrac{2\omega^2}{h}f_1
\left[ \begin{array} {cc}
0 & \sigma_1 \sigma_K + \sigma'_K \sigma_1+\dfrac{\kpara^2}{\omega^2} \, (\sigma_1^{-1} \sigma_K + \sigma'_K \sigma_1^{-1})
 \\ \sigma_1 \sigma'_K + \sigma_K \sigma_1 & 0
\end{array} \right]
\label{dbcommutator}
\end{equation}
According to the definitions of $\sigma_m$ in (\ref{Mm}), $\sigma_K$ and $\sigma'_K$ in (\ref{Keff}), we have
\begin{equation}
\sigma_1 \sigma'_K = \sigma_K \sigma_1  \, , \quad \sigma_1 \sigma_K = \sigma'_K \sigma_1 \, , \quad
\sigma_1^{-1} \sigma_K = \sigma'_K \sigma_1^{-1}\,.
\end{equation}
Thus, (\ref{dbcommutator}) can be simplified to
\begin{equation}
\llbracket M_1 f_1, \llbracket M_2 f_2, M_1 f_1 \rrbracket \rrbracket = \dfrac{4\omega^2}{h}f_1 \left[ \begin{array} {cc}
0 & \sigma_1 \sigma_K + \dfrac{\kpara^2}{\omega^2}\sigma_1^{-1} \sigma_K \\ \sigma_K \sigma_1 & 0
\end{array} \right].
\label{doublecommutator2-1}
\end{equation}
Similar equalities hold for $ \llbracket M_2 f_2, \llbracket M_2 f_2, M_1 f_1 \rrbracket \rrbracket$. Hence, the terms arising from "double commutator" lead to
\begin{equation}
\begin{array}{l}
\dfrac{h^2}{12} \left( \llbracket M_1 f_1, \llbracket M_2 f_2, M_1 f_1 \rrbracket \rrbracket -  \llbracket M_2 f_2, \llbracket M_2 f_2, M_1 f_1 \rrbracket \rrbracket \right) \\[3mm]
= \dfrac{\omega^2 h}{3} \left[ \begin{array} {cc}
0 & \sigma_1 \sigma_K f_1 - \sigma_2 \sigma_K f_2 + \dfrac{{\boldsymbol k}^2}{\omega^2} (\sigma_1^{-1} \sigma_K f_1 - \sigma_2^{-1} \sigma_K f_2) \\ \sigma_K \sigma_1 f_1 -
\sigma_K \sigma_2 f_2 & 0
\end{array} \right]
\end{array}
\label{addterm}
\end{equation}
which is again an off-diagonal matrix. Substituting (\ref{addterm}) into (\ref{second}) and
comparing with the form of matrix $M_{\rm eff}$ in (\ref{Meff}), we find
\begin{equation}
\begin{array}{l}
-\sigma_\parallel = -\sigma_1 f_1 - \sigma_2 f_2 + \dfrac{\omega h}{3} (\sigma_K \sigma_1 f_1 - \sigma_K \sigma_2 f_2) \\
\sigma_\perp= \sigma_1 f_1 + \sigma_2 f_2 + \dfrac{\omega h}{3} (\sigma_1 \sigma_K f_1 - \sigma_2 \sigma_K f_2) \\
\sigma^{-1}_3 = \sigma^{-1}_1 f_1 + \sigma^{-1}_2 f_2 + \dfrac{\omega h}{3}(\sigma_1^{-1} \sigma_K f_1 - \sigma_2^{-1} \sigma_K f_2).
\label{sigmaeff}
\end{array}
\end{equation}
Furthermore, the expressions of effective $\ep_\text{eff}$, $\mu_\text{eff}$ are as follows
\begin{equation}
\begin{array}{l}
\ep_\parallel(\omega,\kpara) = \ep_1 f_1 + \ep_2 f_2 + \dfrac{\omega^2 h^2}{6} \mu_0 f_1 f_2 (\ep_1-\ep_2) (\ep_1 f_1-\ep_2 f_2) \left(1-\dfrac{{\boldsymbol k}^2}{{\omega}^2} \dfrac{\ep_1+\ep_2}{\mu_0\ep_1 \ep_2}\right)\\[2mm]
\ep_\perp(\omega,\kpara) = \ep_1 f_1 + \ep_2 f_2 + \dfrac{\omega^2 h^2}{6} \mu_0 f_1 f_2 (\ep_1-\ep_2) (\ep_1 f_1-\ep_2 f_2)  \\[2mm]
\mu_\parallel(\omega,\kpara) = \mu_0 - \dfrac{\omega^2 h^2}{6} \mu^2_0 f_1 f_2 (\ep_1-\ep_2) (f_1-f_2) \\[3mm]
\mu_\perp(\omega,\kpara)= \mu_0 - \dfrac{\omega^2 h^2}{6} \mu^2_0 f_1 f_2 (\ep_1-\ep_2) (f_1-f_2)\left(1-\dfrac{{\boldsymbol k}^2}{{\omega}^2} \dfrac{\ep_1+\ep_2}{\mu_0\ep_1 \ep_2}\right)  \\[3mm]
{\ep}^{-1}_3(\omega,\kpara) = {\ep}^{-1}_1f_1+{\ep}^{-1}_2f_2 - \dfrac{\omega^2 h^2}{6} \mu_0 f_1 f_2 (\ep_1-\ep_2) ({\ep}^{-1}_1f_1-{\ep}^{-1}_2f_2) \left(1-\dfrac{{\boldsymbol k}^2}{{\omega}^2} \dfrac{\ep_1+\ep_2}{\mu_0\ep_1 \ep_2}\right) \\[2mm]
{\mu}^{-1}_3(\omega,\kpara) = {\mu_0}^{-1} + \dfrac{\omega^2 h^2}{6} f_1f_2(\ep_1-\ep_2) (f_1-f_2). \\
\end{array}
\label{epmueff}
\end{equation}
All the effective parameters are frequency dependent and with spatial dispersion.
These expressions turn out to be equivalent to the ones reported in \cite{Rytov56,Raguin93,Yeh96}, where the effective refractive index $n_{\rm eff}$ is expanded using a power series of period-to-wavelength ratio $\Lambda/\lambda$. Taking equation (5)(s-polarized incidence is considered) in paper of Yeh \cite{Yeh96} as an example, the dispersion relation for a two-component layered medium is approximated by taking the fourth order of $O[(\Lambda/\lambda)^2]$ as
\begin{equation}
  {\it K}^2 + \beta^2 \approx (\dfrac{n_o \omega}{c})^2+\dfrac{(ab)^2(n_1^2-n_2^2)^2 (\omega/c)^4}{12 \Lambda^2}
\label{yeh}
\end{equation}
where $K$ and $\beta$ are the $z$ and $x$ components of the Bloch wave vector, $a$ and $b$ are the thicknesses of the alternating layers, $\Lambda=a+b$ is the period, $n_1$ and $n_2$ are the indices of refraction of the corresponding layers, $c$ the velocity of light in vacuum, and
\begin{equation}
  n_o^2=\dfrac{a}{\Lambda}n_1^2 + \dfrac{b}{\Lambda}n_2^2
\end{equation}
Comparing with the notations in our formula, we have
\begin{equation}
\begin{array}{llll}
  {\it K}=k_3,& \beta=\kpara, &  n_1^2=\ep_1,& n_2^2=\ep_2 \\
  a=h_1,& b=h_2,& \Lambda=h, & n_o^2=\dfrac{a}{\Lambda}n_1^2 + \dfrac{b}{\Lambda}n_2^2
\end{array}
\end{equation}
hence $f_1=a/\Lambda$, $f_2=b/\Lambda$, then (\ref{yeh}) is
\begin{equation}
  k_3^2+\kpara^2 \approx \dfrac{\omega^2}{c^2} (\ep_1 f_1+\ep_2f_2)+\dfrac{\omega^4}{12c^4}f_1^2 f_2^2 (\ep_1-\ep_2)^2
  \label{yeh2}
\end{equation}
which contains the terms of $\omega^2$ and $\omega^4$. On the other hand, for the effective medium in our HOH process, the dispersion relation of $k_3$ versus $\omega$ is
\begin{equation}
  k_3^2=\dfrac{\omega^2}{c^2} (\ep_\perp \mu_\parallel - K_\perp^2)-\dfrac{\mu_\parallel}{\mu_3}\kpara^2,\quad {\rm s-polarization}
\label{k3}
\end{equation}
Let us substitute the expressions of effective parameters (\ref{epmueff}) into (\ref{k3}), and collect the terms up to $\omega^4$ in the calculation process, and finally we obtain the same formula as in (\ref{yeh2}). Similar calculation applied to a p-polarized incident wave shows again our homogenization provides exactly the same effective index. Hence, it is stressed that the effective parameters $\ep_{\rm eff}$, $\mu_{\rm eff}$ and $K_{\rm eff}$ achieved from the HOH algorithm contain more information than the single effective index parameter in \cite{Raguin93,Yeh96,Rytov56}, e.g. the artificial magnetism and bianisotropy from periodic dielectrics which can not be seen in the derivation of refraction index.

In the asymptotic process, we have noticed that the magnetoelectric coupling comes from the odd order approximation while the artificial magnetism and high order corrections to permittivity emerge from the even order approximation in (\ref{BCH}). This can be explained in the following way: The matrix $M_m (m=1,2)$ for the dielectric layer is off-diagonal, the terms of odd order approximation usually contain odd commutators, hence a diagonal matrix will result, the components of which correspond to $K$. However, the terms of even order approximation contain even commutators, the resulting matrix is always off-diagonal. This introduces the artificial magnetism and high order corrections to permittivity.

Moreover, these results are fully consistent with descriptions in terms of spatial dispersion \cite{Landau84,Agranovich06} where, expanding the permittivity in power series of the wave vector, first order yields optical activity and second order magnetic response. The equivalence of these two descriptions (frequency and wave vector power series) is confirmed by considering a unit cell with a center of symmetry, for example a stack of three homogeneous layers (permittivity $\ep_m$ and thickness $h_m$, $m$ = 1, 2, 3) with $\ep_3=\ep_1$ and $h_3 =h_1$. Extending (\ref{BCH}) to the case $\exp[A] \exp[B] \exp[A] = \exp[Z]$ (see Section \ref{sec:bch}), it is found that $K_{\rm eff}$ = 0, and thus it is
retrieved that both bianisotropy and optical activity vanish in a medium with a center of symmetry \cite{Landau84}.

The present expansion in power series of frequency provides a new explanation for artificial magnetism and magnetoelectric coupling. Analytic expressions (\ref{epmueff}) of effective parameters can be used to analyze artificial properties. In particular, we show from (\ref{epmueff}) that: Artificial magnetism, previously proposed with high contrast \cite{Brien02,Chered06,Felbacq05}, can be obtained with arbitrarily low contrast; and bianisotropy, previously achieved in $\Omega$-composites \cite{Tretyakov07}, can be present in simple one-dimensional multilayers.
Note that, one can obtain more accurate asymptotic expressions for the effective parameters with more terms in (\ref{K}) and (\ref{epmueff}), by taking higher order approximation in (\ref{BCH}).

Although this homogenized system has been studied by \cite{Ramakrishna09}, these authors assumed some magnetism and bianisotropy for the periodic multilayered stack, whereas in our case bianisotropy and magnetism come from a homogenization process (one might say ex nihilo). Moreover these authors assumed that the bianisotropy matrix was diagonal, which is not the case in the present paper. It is to the best of our knowledge the first time these constitutive relations are derived, and we emphasize that the mathematical theorem invoked in this section (BCH formula) can be used to generalize our result to two dimensional and three dimensional periodic structures \cite{Lifante05}, such as woodpiles \cite{Gralak03}. In the sequel, we shall also investigate numerically the stop band properties of such a periodic stack of dielectrics, and draw some illuminating parallels with the seminal paper by Pendry \cite{Pendry04}.

\section{Extension of BCH formula for $m$ layers} \label{sec:bch}
In Section \ref{sec:hoh}, we have introduced our HOH algorithm for a periodic multilayered stack consisting of an alternation of two dielectric layers, wherein the BCH formula is implemented. In this section, we would like to investigate the extension of HOH to a stack with $m$ layers in a unit cell, correspondingly, a new form of BCH formula should be explored. We start with $m=3$, which means a multilayered stack with an alternation of three dielectric layers is considered, the thickness of each layer in a unit cell is $h_m$ with $m=1,2,3$, then the transfer matrix of one unit cell will be
\begin{equation}
T= T_1 T_2 T_3= \exp[i M_1 h_1]  \exp[i M_2 h_2]  \exp[i M_3 h_3] \,
\label{T1T2T3}
\end{equation}
rewritten in a more general form, e.g.
\begin{equation}
  \exp[Z]=\exp[A_1]\exp[A_2]\exp[A_3]
\label{bch3}
\end{equation}
defines a product of three exponential functions. Obviously, it can be solved by an iteration of BCH formula. First, we suppose
\begin{equation}
  \exp[A]=\exp[A_1] \exp[A_2]
\end{equation}
and $A$ can be derived through equation (\ref{BCH})
\begin{equation}
  A = A^{(0)} + A^{(1)} + A^{(2)} + A^{(3)} + \cdots
\end{equation}
The term $A^{(i)}$ represents the $i^{th}$ order approximation, and
\begin{equation}
\begin{array}{l}
  A^{(0)} = A_1+A_2\, , \\[1mm]
  A^{(1)}=\dfrac{1}{2} \llbracket A_1,A_2\rrbracket\, , \\[3mm]
  A^{(2)} =\dfrac{1}{12} \llbracket A_1, \llbracket A_1,A_2 \rrbracket \rrbracket-\dfrac{1}{12} \llbracket A_2, \llbracket A_1,A_2 \rrbracket \rrbracket\, , \\
  \cdots
\end{array}
\end{equation}
then (\ref{bch3}) turns to be
\begin{equation}
\exp[Z]=\exp[A]\exp[A_3]\,.
\end{equation}
Using the BCH formula for $Z$:
\begin{equation}
\begin{array}{ll}
  Z&=\log{(\exp[A] \exp[A_3])}=A+A_3+\dfrac{1}{2} \llbracket A,A_3\rrbracket  \\[3mm]
  &+ \dfrac{1}{12} \llbracket A, \llbracket A, A_3 \rrbracket \rrbracket-\dfrac{1}{12} \llbracket A_3, \llbracket A, A_3 \rrbracket \rrbracket
  - \dfrac{1}{24} \llbracket A_3, \llbracket A, \llbracket A, A_3 \rrbracket \rrbracket \rrbracket + \cdots
\end{array}
\end{equation}
we suppose $Z=Z^{(0)}+Z^{(1)}+Z^{(2)}+\cdots$, where $Z^{(m)}$ is the $m^{th}$ order approximation for $Z$. The zeroth order $Z^{(0)}$ is simply the sum of $A_1$, $A_2$ and $A_3$,
\begin{equation}
 Z^{(0)}= A^{(0)} + A_3 =A_1+A_2+A_3\,.
\label{z0}
\end{equation}
The first order including single commutator of these three matrices $A_1$, $A_2$ and $A_3$ is
\begin{equation}
 Z^{(1)}=A^{(1)}+\dfrac{1}{2} \llbracket A^{(1)},A_3\rrbracket = \dfrac{1}{2} \llbracket A_1,A_2\rrbracket +\dfrac{1}{2} \llbracket A_1+A_2,A_3\rrbracket\,.
\label{z1}
\end{equation}
and the second order including double commutators is
\begin{equation}
\begin{array}{ll}
Z^{(2)}&= A^{(2)}+\dfrac{1}{2} \llbracket A^{(2)},A_3\rrbracket + \dfrac{1}{12} \llbracket A^{(1)}, \llbracket A^{(1)}, A_3\rrbracket \rrbracket - \dfrac{1}{12} \llbracket A_3,
\llbracket A^{(1)}, A_3\rrbracket \rrbracket  \\[3mm]
&= \dfrac{1}{12} \llbracket A_1, \llbracket A_1,A_2 \rrbracket \rrbracket-\dfrac{1}{12} \llbracket A_2, \llbracket A_1,A_2 \rrbracket \rrbracket
+ \dfrac{1}{4} \llbracket \llbracket A_1,A_2\rrbracket,A_3\rrbracket  \\[3mm]
&+ \dfrac{1}{12} \llbracket A_1+A_2, \llbracket A_1+A_2, A_3\rrbracket \rrbracket - \dfrac{1}{12} \llbracket A_3, \llbracket A_1+A_2, A_3\rrbracket \rrbracket \,.
\end{array}
\label{z2}
\end{equation}
A similar algorithm holds for the third and higher orders, which will not be further explored here.

Since the BCH formula for (\ref{bch3}) has been derived, one can easily realize the homogenization for a multilayered stack with an alternation of three dielectric layers. Here we assume that the third layer of the unit cell is identical to the first layer, i.e. $\ep_3=\ep_1$ and $h_3=h_1$, as well as $M_3=M_1$;
taking equation (\ref{z1}) with $A_1 = A_3$, we have $Z^{(1)}=0$. It should be noted that all the odd orders of approximation in the HOH asymptotics vanish, which can be attributed to the center symmetric property of the structure \cite{Boris08}. In contrast, even orders rule the approximation process in that case. Applying the formulae (\ref{z0})-(\ref{z2}) to (\ref{T1T2T3}), one deduces the expressions for the effective parameters at 2nd order approximation
\begin{equation}
\begin{array}{l}
\ep_\parallel=2\ep_1 f_1 + \ep_2 f_2 -\dfrac{\omega^2 h^2}{3} \mu_0 f_1 f_2 (\ep_1-\ep_2) (\ep_1 f_1+\ep_2 f_2)\left(1-\dfrac{{\boldsymbol k}^2}{\omega^2}\dfrac{\ep_1+\ep_2}{\mu_0\ep_1\ep_2}\right) \\
\ep_\perp=2\ep_1 f_1 + \ep_2 f_2 -\dfrac{\omega^2 h^2}{3} \mu_0 f_1 f_2 (\ep_1-\ep_2) (\ep_1 f_1+\ep_2 f_2)  \\
\mu_\parallel =\mu_0 + \dfrac{\omega^2 h^2}{3}\mu_0^2 f_1 f_2 (\ep_1-\ep_2) (f_1+f_2)  \\[3mm]
\mu_\perp=\mu_0 + \dfrac{\omega^2 h^2}{3}\mu_0^2 f_1 f_2 (\ep_1-\ep_2) (f_1+f_2)\left(1-\dfrac{{\boldsymbol k}^2}{\omega^2}\dfrac{\ep_1+\ep_2}{\mu_0\ep_1\ep_2}\right)  \\[2mm]
\ep_3^{-1}=2{\ep}^{-1}_1f_1 +{\ep}^{-1}_2f_2+\dfrac{\omega^2 h^2}{3} \mu_0 f_1 f_2 (\ep_1-\ep_2)({\ep}^{-1}_1f_1+{\ep}^{-1}_2f_2)
\left(1-\dfrac{{\boldsymbol k}^2}{\omega^2}\dfrac{\ep_1+\ep_2}{\mu_0\ep_1\ep_2}\right)  \\[2mm]
\mu_3^{-1}=\mu_0^{-1}-\dfrac{\omega^2 h^2}{3} f_1 f_2 (\ep_1-\ep_2) (f_1+f_2)  \\[2mm]
K_\perp = K_\parallel = 0 \, .
\label{epmueff2}
\end{array}
\end{equation}
The effective bianisotropy $K_{\rm eff}$ is equal to zero since $Z^{(2p+1)}=0$, and only the artificial magnetism and high order corrections to the permittivity persist. Similar calculation can be applied to higher order approximation, e.g. the expressions of these parameters in 4th order approximation under a normal incidence is discussed in \cite{yan12}.

So far, we have discussed the HOH asymptotic for a multilayered stack consisting of an alternation of two layers, as well as three layers; and the BCH formula has been also amended correspondingly. If we extend this asymptotic procedure to a more general case, i.e. we consider a stack with an alternation of $m (\geq 3)$ layers, then the transfer matrix of a unit cell becomes
\begin{equation}
  T=\exp[Z]=\prod\limits_{i=1}^{m} \exp[A_i] \, .
 \label{zm}
\end{equation}
Once again, tedious iteration of BCH in (\ref{zm}) can produce all the formulae for different orders of approximation. Here, we just list the formulae from zeroth order to second order approximation:
\begin{equation}
\begin{array}{c}
  Z^{(0)}=\sum\limits_{i=1}^{m} A_i \, , \\
  Z^{(1)}=\dfrac{1}{2}\sum\limits_{i=2}^{m}\, \llbracket \sum\limits_{j=2}^{i}A_{j-1},A_i \rrbracket \, , \\
  Z^{(2)}=\dfrac{1}{4}\sum\limits_{i=3}^{m} \llbracket \llbracket \sum\limits_{j=1}^{i-2}A_j, A_{i-1} \rrbracket, \sum\limits_{i}^{m}A_i \rrbracket \\ +\dfrac{1}{12}\sum\limits_{i=2}^{m} \left(\llbracket \sum\limits_{j=1}^{i-1} A_j, \llbracket \sum\limits_{j=1}^{i-1} A_j, A_i \rrbracket\rrbracket - \llbracket A_i, \llbracket \sum\limits_{j=1}^{i-1} A_j, A_i \rrbracket\rrbracket\right)\, . \\
\end{array}
\label{bchm}
\end{equation}
These formulae can be checked by taking $m=3$ and then compare with equations (\ref{z0})-(\ref{z2}). Apart from an iteration of BCH formula, another method to obtain the approximation for $Z$ would be to expand each exponential function by Taylor series, and collect the terms with same order, which will not be further discussed in this paper.
\section{Corrector for HOH asymptotics} \label{sec:corc}

In this section, we would like to introduce the corrector for the asymptotic error in the HOH algorithm, where a structure with thickness constant with respect to the frequency or the wavelength should be considered. We start with a structure consisting of two dielectric layers as shown in figure \ref{figure2}(a), where the parameters are $\ep_1$, $\ep_2$, and thicknesses $\tilde{h}_1$, $\tilde{h}_2$ satisfying $\tilde{h}_1+\tilde{h}_2 \approx \lambda$. In order to obtain a homogeneous effective medium for such a structure, a geometric reconstruction is implemented here by dividing the structure into $n$ times smaller unit cells, figure \ref{figure2}(b) shows the new construction when $n=2$, the thicknesses of two layers being $\tilde{h}_1/2$ and $\tilde{h}_2/2$, respectively. It is noted that the thickness of each layer in a unit cell is decreasing in proportion to an increasing $n$, as shown in figure \ref{figure2}(c). When $n$ tends to a large enough constant, the thickness of the unit cell $\tilde{h}_i/n$ will be much smaller than the wavelength of light, and a homogeneous effective medium can be achieved as shown in figure \ref{figure2}(d).
\begin{figure}[!htb]
    \centering
    \includegraphics[scale=0.6]{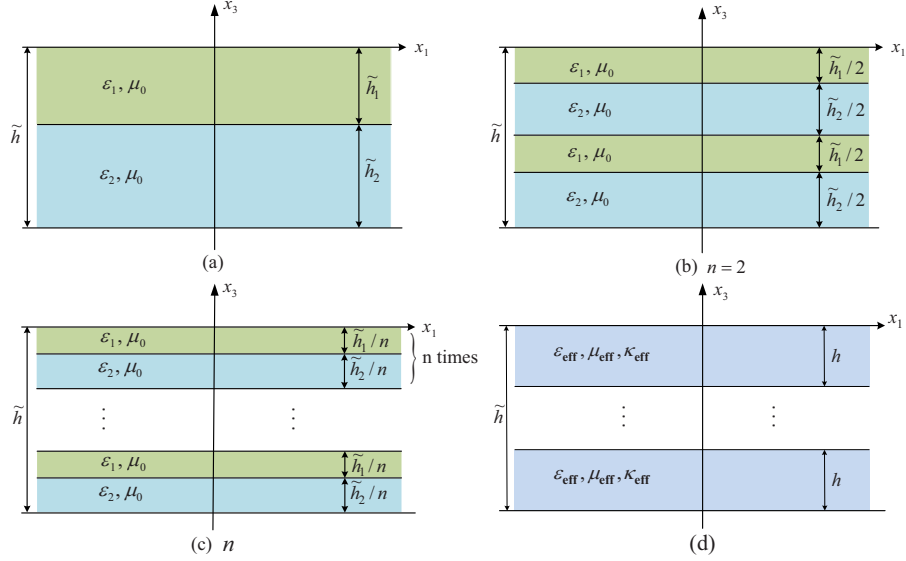}
    \caption{Schematic diagram of the homogenization algorithm for a structure with constant thickness: (a) A structure consisting of two homogeneous dielectric layers of permittivities $\varepsilon_1$, $\varepsilon_2$ and thicknesses $\tilde{h}_1$, $\tilde{h}_2$; (b)-(c) A reconstruction of (a) by dividing the unit cell into $n$ times smaller unit cells where $n=2$ in (b), and a large $n$ in (c); (d) An effective medium with thickness $\tilde{h}$.}
\label{figure2}
\end{figure}

The transfer matrices of the periodic medium and the effective homogenized medium should satisfy:
\begin{equation}\label{liu3}
\big\{ \exp[i M_2 \tilde{h}_2/n] \exp[i M_1 \tilde{h}_1/n] \big\}^n
\simeq \exp[i M_\text{eff} \tilde{h}],\, \, \, {\rm for\,\, large}\, n  .
\end{equation}
The BCH formula is still central to solve this problem, hence we recall its statement:
\begin{align}\label{campbell}
   \exp \big[ \, A_1 \, \big] \exp \big[ \, A_2 \, \big] &=
   \exp \big[ \,A_1 + A_2 + \dfrac{1}{2}\llbracket A_1, A_2 \rrbracket + \dfrac{1}{12}\llbracket  A_1 , \llbracket A_1, A_2 \rrbracket \rrbracket  \notag\\
   &- \dfrac{1}{12} \llbracket  A_2 , \llbracket A_1, A_2 \rrbracket \rrbracket + \cdots \big] \, .
\end{align}
Let us take the first order estimate in (\ref{campbell}). This leads to the following error estimate:
\begin{proposition}
If $A_1$ and $A_2$ in (\ref{campbell}) are bounded by $\left\| A \right\| / 2$, then
\begin{equation}\label{1order1}
\left\| \left\{
\exp \left[ A_1 / n \right] \exp \left[ A_2 / n \right]
\right\}^n \! -
\exp \left[ \left\langle A \right\rangle+ L_A / n
\right] \right\| \leq \frac{a}{n^2} \, ,
\end{equation}
with
\begin{equation}
\begin{array}{c}
\vspace*{2mm}
\left\langle A \right\rangle = A_1 + A_2 \, , \quad \quad
L_A = \llbracket A_1, A_2 \rrbracket / 2 \,,
\vspace*{2mm}\\
a = \displaystyle{\frac {\left\| A \right\|^3} {3}}
\exp \left[ \, 3 \, \left\| A \right\| \right] \,
\exp \left[ \, \left\| A \right\|^2  \right] \,.
\end{array}
\label{1order2}
\end{equation}
\end{proposition}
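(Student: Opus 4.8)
The plan is to replace the $n$-fold product by the $n$-th power of a single exponential whose value is exactly the comparison term in~(\ref{1order1}), and then to transfer a one-step error estimate to that power by telescoping. Writing $t=1/n$, set
\[
P=\exp[A_1/n]\exp[A_2/n]\,,\qquad Q=\exp\!\big[\langle A\rangle/n+L_A/n^2\big]\,.
\]
Because the exponent of $Q$ is a single matrix, $Q^n=\exp[\langle A\rangle+L_A/n]$ is precisely the term subtracted in~(\ref{1order1}), so the quantity to bound is $\|P^n-Q^n\|$. I would then use the telescoping identity
\[
P^n-Q^n=\sum_{k=0}^{n-1}P^{k}\,(P-Q)\,Q^{\,n-1-k}\,,
\]
which, with submultiplicativity of the operator norm, gives $\|P^n-Q^n\|\le n\,\big(\max\{\|P\|,\|Q\|\}\big)^{n-1}\,\|P-Q\|$. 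The whole estimate thus reduces to a uniform control of the powers $\|P\|^{n-1},\|Q\|^{n-1}$ and to a one-step bound $\|P-Q\|=O(n^{-3})$.

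For the powers I would only use $\|\exp[X]\|\le\exp[\|X\|]$. Since $\|A_1\|+\|A_2\|\le\|A\|$, one gets $\|P\|\le\exp[\|A\|/n]$, hence $\|P\|^{n-1}\le\exp[\|A\|]$; and since $\|L_A\|=\tfrac12\|\llbracket A_1,A_2\rrbracket\|\le\|A_1\|\,\|A_2\|\le\|A\|^2/4$, one gets $\|Q\|^{n-1}\le\exp[\|A\|+\|A\|^2/n]\le\exp[\|A\|+\|A\|^2]$. These two factors, combined with the exponential factor produced by the one-step estimate, are what assemble into the $\exp[3\|A\|]\exp[\|A\|^2]$ appearing in~(\ref{1order2}).

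The decisive step, and the one I expect to be the main obstacle, is the one-step bound. Its $n^{-3}$ decay rests on the algebraic fact that $P$ and $Q$, viewed as entire functions of $t$, share their Taylor expansions through order $t^2$: a direct multiplication of the two exponential series gives
\[
P=I+t\langle A\rangle+t^2\big(\tfrac12A_1^2+A_1A_2+\tfrac12A_2^2\big)+O(t^3)\,,
\]
and expanding $Q$ reproduces the same three terms, the order-$t^2$ agreement being exactly the statement that $L_A=\tfrac12\llbracket A_1,A_2\rrbracket$ is the first Baker--Campbell--Hausdorff correction in~(\ref{campbell}). Consequently $P-Q=\sum_{k\ge3}D_k\,t^k$ vanishes to third order, and since $t^{k}\le t^{3}$ for $k\ge3$ and $t\le1$, we obtain $\|P-Q\|\le n^{-3}\sum_{k\ge3}\|D_k\|$.

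It remains to bound this tail explicitly and uniformly in $n$, which is where the constant $\tfrac13\|A\|^3$ is born. Writing $P_k=\sum_{j+l=k}A_1^{\,j}A_2^{\,l}/(j!\,l!)$ gives $\|P_k\|\le(\|A_1\|+\|A_2\|)^k/k!\le\|A\|^k/k!$, while the Taylor coefficients of $Q$ are dominated term-by-term by those of the scalar function $\exp[\|A\|\,t+\tfrac14\|A\|^2t^2]$; summing from $k=3$ by means of the elementary estimate $\sum_{k\ge3}x^k/k!\le\tfrac16x^3e^{x}$ yields a bound of the form $\sum_{k\ge3}\|D_k\|\le c\,\|A\|^3\exp[2\|A\|]$. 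The leading order-$t^3$ coefficient is in fact the third BCH bracket $\tfrac1{12}\big(\llbracket A_1,\llbracket A_1,A_2\rrbracket\rrbracket-\llbracket A_2,\llbracket A_1,A_2\rrbracket\rrbracket\big)$, of norm at most $\|A\|^3/12$, so recovering the sharper value $\tfrac13$ recorded in~(\ref{1order2}) requires retaining the cancellation among the order-$t^3$ terms that the raw triangle inequality discards — this is the one genuinely delicate point. Multiplying the resulting one-step bound by $n$ and by the power factors of the second paragraph collapses $n^{-3}$ to $n^{-2}$ and reproduces the constant $a$. I note finally that working with the entire power series in $t$, rather than directly with $\log P$ and the BCH series, is what keeps the one-step estimate valid for every $n\ge1$, since it avoids any appeal to the finite convergence radius of the matrix logarithm.
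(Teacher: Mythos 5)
Your proof follows essentially the same route as the paper's: the same pair $S_n=\exp\left[\left\langle A\right\rangle/n+L_A/n^2\right]$ and $T_n=\exp\left[A_1/n\right]\exp\left[A_2/n\right]$ (your $Q$ and $P$), the same telescoping identity $S_n^n-T_n^n=\sum_p S_n^{p-1}(S_n-T_n)T_n^{n-p}$ with submultiplicative bounds $\|S_n\|\leq\exp[\|A\|/n]\exp[\|A\|^2/n^2]$, $\|T_n\|\leq\exp[\|A\|/n]$, and the same one-step estimate obtained by expanding both exponentials in powers of $1/n$ and observing that the series agree through order $n^{-2}$, leaving $\|S_n-T_n\|=O(n^{-3})$. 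The one point you single out as delicate --- obtaining the constant $\|A\|^3/3$ rather than the cruder value a raw term-by-term triangle inequality yields --- is precisely the step the paper asserts in its inequality for $\|S_n-T_n\|$ without further detail, so your account is, if anything, more explicit about where the remaining work lies.
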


\begin{proof}
Let $S_n$ and $T_n$ be defined by
\begin{equation}\label{1order3}
\begin{array}{c}
   S_n = \exp\left[ \left\langle A \right\rangle /n + L_A /n^2 \right]\,,
\vspace*{2mm}\\
   T_n = \exp\left[ A_1/n \right] \exp\left[ A_2/n \right]\,.
\end{array}
\end{equation}
One can write
\begin{equation}
S_n^n - T_n^n = \sum_{p = 1}^{n} S_n^{p-1}
\left( S_n - T_n \right) T_n^{n-p} \, ,
\label{1order4}
\end{equation}
so that
\begin{equation}
\left\| S_n^n - T_n^n \right\| \leq
\sum_{p = 1}^{n}
\left\| S_n \right\|^{p-1}
\left\| S_n - T_n \right\|
\left\| T_n \right\|^{n-p} \, .
\label{1order5}
\end{equation}
Straightforward upper bounds for $S_n$ and $T_n$ are:
\begin{equation}\label{1order6}
\begin{array}{c}
   \left\|S_n\right\|  \leq  \exp\left[ \left\| A \right\| /n \right]
   \exp\left[ \left\| A \right\|^2 / n^2 \right]\,,
   \vspace*{2mm}\\
   \left\|T_n\right\|  \leq  \exp\left[ \left\| A \right\| /n \right] \,.
\end{array}
\end{equation}
Then developing exponential functions in (\ref{1order3}) as series,
\begin{equation}\label{1order7}
\begin{array}{c}
   S_n  = 1 + \left( \left\langle A \right\rangle /n + L_A /n^2 \right)
   +\left( \left\langle A \right\rangle /n + L_A /n^2 \right)^2/2+ \cdots \,
   \vspace*{2mm}\\
   T_n  = \left[\, 1 + A_1 / n + A_1^2 / (2 n^2) + \cdots \, \right]
  \left[ \, 1 + A_2 / n + A_2^2 / (2 n^2) + \cdots \, \right] \,
\end{array}
\end{equation}
one has
\begin{equation}
\left\| S_n - T_n \right\| \leq
\frac{\left\| A \right\|^3}{3 n^3}
\exp\left[ 2 \left\| A \right\|\right]
\exp\left[ \left\| A \right\|^2 \right] \, .
\label{1order8}
\end{equation}
Substituting (\ref{1order6}) and (\ref{1order8}) into (\ref{1order5}) provides the results (\ref{1order1})-(\ref{1order2}).
\end{proof}

In periodic classical homogenization \cite{Bensoussan78,Bakhalov89},
only the leading order term is kept in the asymptotic procedure, hence
the corrector is of order $1/n$. Here, we find that:
\begin{equation}\label{0order11}
\left\| \left\{
\exp \left[ A_1 / n \right] \exp \left[ A_2 / n \right]
\right\}^n \! -
\exp \left[ \left\langle A \right\rangle\right] \right\| \leq \frac{b}{n} \, ,
\end{equation}
with
\begin{equation}
\vspace*{2mm}
\left\langle A \right\rangle = A_1 + A_2 \, , \quad \quad
b =\left\| A \right\|^2
\exp \left[ 2\left\| A \right\| \right] \,.
\end{equation}
We therefore emphasize that our iterative procedure amounts to keeping more and more terms in the asymptotic expansion of
classical homogenization and thus improves the order of the corrector of classical homogenization. Similar ideas have been implemented in the high-frequency homogenization recently developed by Craster et al. \cite{Craster10}, however with no correctors being derived therein.
It would be also interesting to see how randomness would affect our correctors: at order zero, the corrector is known to vary between $n^{-1/2}$ and $n^{-1}$ \cite{Bourgeat99}.

\noindent Furthermore,using the same proof process, we can also obtain the second order estimate of the limits (\ref{campbell}).
\begin{proposition}
If $A_1$ and $A_2$ in (\ref{campbell}) are bounded by $\left\| A \right\| / 2$, then
\begin{equation}\label{2order1}
\left\| \left\{
\exp \left[ A_1 / n \right] \exp \left[ A_2 / n \right]
\right\}^n \! -
\exp \left[ \left\langle A \right\rangle+ L_A / n+R_A/n^2
\right] \right\| \leq \frac{a}{n^3} \, ,
\end{equation}
with
\begin{equation}
\begin{array}{c}
\vspace*{2mm}
\left\langle A \right\rangle = A_1 + A_2 \, , \quad \quad
L_A = \llbracket A_1, A_2 \rrbracket / 2 \, ,\vspace*{2mm}\\
R_A = \dfrac{1}{12}\llbracket  A_1 , \llbracket A_1, A_2 \rrbracket \rrbracket  - \dfrac{1}{12} \llbracket  A_2 , \llbracket A_1, A_2 \rrbracket \rrbracket \vspace*{2mm}\\
a = \displaystyle{\frac{\left\| A \right\|^4}{4}}
\exp \left[ \, 3 \, \left\| A \right\| \right] \,
\exp \left[ \, \left\| A \right\|^2  \right] \,
\exp \left[ \, \left\| A \right\|^3  \right] \,.
\end{array}
\label{2order2}
\end{equation}
\end{proposition}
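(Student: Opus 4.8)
The plan is to reproduce verbatim the argument given for the preceding first-order Proposition, promoting every expansion by one order in $1/n$. Concretely, I would set
\[
S_n = \exp\left[ \left\langle A \right\rangle/n + L_A/n^2 + R_A/n^3 \right], \qquad
T_n = \exp\left[ A_1/n \right]\exp\left[ A_2/n \right],
\]
so that $S_n^n = \exp[\left\langle A \right\rangle + L_A/n + R_A/n^2]$ is \emph{exactly} the matrix whose distance to $\{\exp[A_1/n]\exp[A_2/n]\}^n = T_n^n$ must be controlled. The telescoping identity
\[
S_n^n - T_n^n = \sum_{p=1}^n S_n^{p-1}\left( S_n - T_n \right)T_n^{n-p}
\]
together with the submultiplicative bound $\|S_n^n - T_n^n\| \leq \sum_{p=1}^n \|S_n\|^{p-1}\,\|S_n - T_n\|\,\|T_n\|^{n-p}$ then carries over without change.

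Next I would record the norm bounds. The estimate $\|T_n\| \leq \exp[\|A\|/n]$ is unchanged. For $S_n$ the only new feature is the cubic term, for which the hypothesis $\|A_i\|\leq \|A\|/2$ combined with the elementary commutator inequality $\|\llbracket X,Y\rrbracket\| \leq 2\,\|X\|\,\|Y\|$ gives $\|L_A\| \leq \|A\|^2/2$ and $\|R_A\|\leq \|A\|^3/12$; hence
\[
\|S_n\| \leq \exp\left[\|A\|/n\right]\exp\left[\|A\|^2/n^2\right]\exp\left[\|A\|^3/n^3\right].
\]
After raising this to the $(p-1)$ power and summing over $p$, the extra factor $\exp[\|A\|^3/n^3]$ is precisely what produces the new term $\exp[\|A\|^3]$ in the constant $a$ of (\ref{2order2}).

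The crux is the refined bound on $\|S_n - T_n\|$, which must now be $O(1/n^4)$ rather than $O(1/n^3)$. Here I would expand both exponentials as power series in $1/n$ and exploit the decisive fact, supplied by the BCH formula (\ref{campbell}), that $\left\langle A \right\rangle/n$, $L_A/n^2$ and $R_A/n^3$ are exactly the first three BCH terms of $\log T_n$. Consequently all contributions of orders $1/n$, $1/n^2$ and $1/n^3$ cancel between $S_n$ and $T_n$, and the leading surviving term is the fourth-order BCH contribution, yielding an estimate of the shape
\[
\|S_n - T_n\| \leq \frac{\|A\|^4}{4 n^4}\exp\left[2\|A\|\right]\exp\left[\|A\|^2\right]\exp\left[\|A\|^3\right].
\]
Substituting this, together with the norm bounds above, into the telescoped sum then gives the claimed $a/n^3$.

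I expect the verification of the third-order cancellation to be the main obstacle. One must check that the doubly nested commutators generated by the product $\exp[A_1/n]\exp[A_2/n]$ reassemble into exactly $R_A/n^3$, and then dominate the \emph{entire} remaining tail uniformly, rather than merely its leading term, so that the constant genuinely collapses to the clean form $\|A\|^4/4$ times the exponential factors. As in the first-order case, it is the control of this tail, not the algebra of the leading terms, where care is required, since one is comparing two distinct infinite series and must bound their difference term by term.
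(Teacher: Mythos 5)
Your proposal follows essentially the same route as the paper's own proof: the identical choice of $S_n$ and $T_n$, the same telescoping identity and submultiplicative bound, the same norm estimates $\|S_n\|\leq\exp[\|A\|/n]\exp[\|A\|^2/n^2]\exp[\|A\|^3/n^3]$ and $\|T_n\|\leq\exp[\|A\|/n]$, and the same key estimate $\|S_n-T_n\|\leq \|A\|^4\exp[2\|A\|]\exp[\|A\|^2]\exp[\|A\|^3]/(4n^4)$ obtained by series expansion and cancellation of the BCH terms through order $1/n^3$. The tail-control step you flag as the crux is indeed the only delicate point, and the paper itself passes over it just as tersely, so your reconstruction is faithful to the published argument.
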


\begin{proof}
Let $S_n$ and $T_n$ be defined by
\begin{equation}\label{2order3}
\begin{array}{c}
   S_n = \exp\left[ \left\langle A \right\rangle /n + L_A /n^2 + R_A/n^3 \right]\,
\vspace*{2mm}\\
   T_n = \exp\left[ A_1/n \right] \exp\left[ A_2/n \right]\,.
\end{array}
\end{equation}

Upper bounds for $S_n$ and $T_n$ are straightforward:
\begin{equation}\label{2order4}
\begin{array}{c}
   \left\|S_n\right\|  \leq  \exp\left[ \left\| A \right\| /n \right]
   \exp\left[ \left\| A \right\|^2 / n^2 \right]
   \exp\left[ \left\| A \right\|^3 / n^3 \right]\,
   \vspace*{2mm}\\
   \left\|T_n\right\|  \leq  \exp\left[ \left\| A \right\| /n \right]\,.
\end{array}
\end{equation}

And developing the exponential function in (\ref{2order3}) as a series ,
\begin{equation}\label{2order5}
\begin{array}{c}
   S_n  = 1 + \left( \left\langle A \right\rangle /n + L_A /n^2 + R_A/n^3 \right)
   +\left( \left\langle A \right\rangle /n + L_A /n^2 + R_A/n^3 \right)^2/2+ \cdots
   \vspace*{2mm}\\
   T_n  = \left[\, 1 + A_1 / n + A_1^2 / (2 n^2) + \cdots \, \right]
  \left[ \, 1 + A_2 / n + A_2^2 / (2 n^2) + \cdots \, \right] \, ,
\end{array}
\end{equation}

one has
\begin{equation}
\left\| S_n - T_n \right\| \leq
\frac{\left\| A \right\|^4}{4 n^4}
\exp\left[ 2 \left\| A \right\|\right]
\exp\left[ \left\| A \right\|^2\right]
\exp\left[ \left\| A \right\|^3 \right]\, .
\label{2order6}
\end{equation}
Substituting (\ref{2order4}), (\ref{2order6}) into (\ref{1order5}) leads to (\ref{2order1})-(\ref{2order2}).
\end{proof}
It is noted that as the approximation order increases, the speed of the convergence defined by the difference between transfer matrices of multilayers and effective medium increases by a factor $1/n$, hence it seems natural to conjecture that for the higher order approximation, the estimate between the transfer matrices of multilayers and the effective medium will be much more accurate with an error of $1/n^p$, with $p$ the order taken in HOH approximation process.

Similarly, the asymptotic corrector can be applied to the stack with $m$ layers. Here, we explore the corrector for HOH approximation in a multilayered stack with three layers, where the permittivities are $\ep_1$, $\ep_2$, $\ep_3$ and thicknesses are $\tilde{h}_1/n$, $\tilde{h}_2/n$, $\tilde{h}_3/n$. Applying the same geometric reconstruction as shown in figure \ref{figure2}, the equivalent relation between the transfer matrices of the stack and effective medium is
\begin{equation}\label{liu33}
\big\{ \exp[i M_1 \tilde{h}_1/n] \exp[i M_2 \tilde{h}_2/n] \exp[i M_3 \tilde{h}_3/n] \big\}^n
\simeq \exp[i M_\text{eff} \tilde{h}] \, .
\end{equation}
According to equations (\ref{T1T2T3}) and (\ref{z0})-(\ref{z2}), the BCH leads to
\begin{align}
   \exp \big[ \, A_1 \, \big] \exp \big[ \, A_2 \, \big]\exp \big[ \, A_3 \, \big] &=
   \exp \big[ \, A_1 + A_2 + A_3+ (A_1 A_2 - A_2 A_1) / 2  \notag \\
   & + (A_1 A_3 - A_3 A_1) / 2 +(A_2 A_3 - A_3 A_2) / 2 +\cdots \big] \, ,
\label{corr3}
\end{align}
Taking the zeroth order approximation as an example, we can state
\begin{proposition}
If $A_1$, $A_2$ and $A_3$ in (\ref{corr3}) are bounded by $\left\| A \right\| / 3$, then
\begin{equation}\label{0order31}
\left\| \left\{
\exp \left[ A_1 / n \right] \exp \left[ A_2 / n \right] \exp \left[ A_3 / n \right]
\right\}^n \! -
\exp \left[ \left\langle A \right\rangle
\right] \right\| \leq \frac{a}{n} \, ,
\end{equation}
with
\begin{equation}
\begin{array}{c}
\vspace*{2mm}
\left\langle A \right\rangle = A_1 + A_2 +A_3 \, ,
\vspace*{2mm}\\
a = {\left\| A \right\|^2}
\exp \left[ \, 2\, \left\| A \right\| \right] \,.
\end{array}
\label{0order32}
\end{equation}
\end{proposition}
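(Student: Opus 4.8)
The plan is to mirror the telescoping argument of the two preceding propositions, now with a product of three exponentials. I would set
\[
S_n = \exp\left[\langle A\rangle/n\right], \qquad T_n = \exp\left[A_1/n\right]\exp\left[A_2/n\right]\exp\left[A_3/n\right],
\]
with $\langle A\rangle = A_1+A_2+A_3$, so that $S_n^n = \exp[\langle A\rangle]$ is exactly the target matrix and $T_n^n$ is the bracketed expression on the left of (\ref{0order31}). The identity (\ref{1order4}) and its consequence (\ref{1order5}) then reduce everything to controlling the three quantities $\|S_n\|$, $\|T_n\|$ and $\|S_n - T_n\|$, precisely as before. Notice that the target constant here is identical in form to $b$ in (\ref{0order11}), so one expects the three-layer proof to be a near-verbatim repetition of the two-layer zeroth-order estimate.

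The norm bounds are immediate. Since the hypothesis gives $\|A_i\| \leq \|A\|/3$, one has $\|\langle A\rangle\| \leq \|A\|$ and $\|A_1\|+\|A_2\|+\|A_3\| \leq \|A\|$, whence
\[
\|S_n\| \leq \exp[\|A\|/n], \qquad \|T_n\| \leq \exp[\|A\|/n],
\]
the second bound following from submultiplicativity applied to the three factors. Consequently every product $\|S_n\|^{p-1}\|T_n\|^{n-p}$ in (\ref{1order5}) is at most $\exp[\|A\|(n-1)/n] \leq \exp[\|A\|]$, and there are $n$ such terms. Observe that, unlike in the earlier propositions, $S_n$ carries no commutator correction, so no extra $\exp[\|A\|^2/n^2]$ factor appears; this is exactly why the final constant $a$ is free of such a factor.

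The heart of the matter is the estimate of $\|S_n - T_n\|$, and the decisive point is that both matrices agree through order $1/n$: expanding, the constant terms are $1$ and the order-$1/n$ terms are both $\langle A\rangle/n$, so the difference is genuinely $O(1/n^2)$. To pin down the constant I would write $T_n - 1 - \langle A\rangle/n$ as the second-order Taylor remainder $\int_0^1 (1-s)\,g''(s)\,ds$ of $g(t) = \exp[tA_1/n]\exp[tA_2/n]\exp[tA_3/n]$. Differentiating the triple product twice produces the diagonal terms $(A_i/n)^2$ and the cross terms $(A_i/n)(A_j/n)$, and the perfect-square identity $\sum_i\|A_i\|^2 + 2\sum_{i<j}\|A_i\|\|A_j\| = \bigl(\sum_i\|A_i\|\bigr)^2 \leq \|A\|^2$ collapses them into the single clean bound $\|g''(s)\| \leq (\|A\|/n)^2\exp[s\|A\|/n]$, giving $\|T_n - 1 - \langle A\rangle/n\| \leq \|A\|^2\exp[\|A\|]/(2n^2)$. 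The matching bound $\|S_n - 1 - \langle A\rangle/n\| \leq \|A\|^2\exp[\|A\|]/(2n^2)$ follows from the elementary inequality $\exp[x]-1-x \leq (x^2/2)\exp[x]$, and the triangle inequality then yields $\|S_n - T_n\| \leq \|A\|^2\exp[\|A\|]/n^2$.

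Substituting these three ingredients into (\ref{1order5}) gives
\[
\|S_n^n - T_n^n\| \leq n\cdot\exp[\|A\|]\cdot\frac{\|A\|^2}{n^2}\exp[\|A\|] = \frac{\|A\|^2\exp[2\|A\|]}{n} = \frac{a}{n},
\]
which is the claim (\ref{0order31})--(\ref{0order32}). The only genuine obstacle is the bookkeeping in the $\|S_n - T_n\|$ step: one must check that the additional cross terms produced by the third factor reassemble into the same $(\|A\|/n)^2$ bound as in the two-layer case, and it is exactly the perfect-square identity that guarantees this. Everything else is a direct transcription of the earlier telescoping estimate.
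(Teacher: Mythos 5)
Your proposal is correct and follows essentially the same route as the paper: the same choice of $S_n$ and $T_n$, the telescoping identity (\ref{1order4})--(\ref{1order5}), the bounds $\|S_n\|,\|T_n\|\leq\exp[\|A\|/n]$, and a series/Taylor estimate of $\|S_n-T_n\|$. Your more careful integral-remainder bound $\|S_n-T_n\|\leq \|A\|^2\exp[\|A\|]/n^2$ is in fact the one that reproduces the stated constant $a=\|A\|^2\exp[2\|A\|]$ exactly, whereas the paper's intermediate estimate (\ref{0order36}) is stated with an extra factor that does not quite match its own conclusion.
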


\begin{proof}
Let $S_n$ and $T_n$ be defined by
\begin{equation}\label{0order33}
\begin{array}{c}
   S_n = \exp\left[ \left\langle A \right\rangle /n \right]\,
\vspace*{2mm}\\
   T_n = \exp\left[ A_1/n \right] \exp\left[ A_2/n \right]\exp\left[ A_3/n \right]\,.
\end{array}
\end{equation}
Equations (\ref{1order4}) and (\ref{1order5}) remain valid and upper bounds for $S_n$ and $T_n$ are
\begin{equation}\label{0order34}
\begin{array}{c}
   \left\|S_n\right\|  \leq  \exp\left[ \left\| A \right\| /n \right]\,,
   \vspace*{2mm}\\
   \left\|T_n\right\|  \leq  \exp\left[ \left\| A \right\| /n \right]\,.
\end{array}
\end{equation}
Then developing the exponential function as a series,
\begin{equation}\label{0order35}
\begin{array}{c}
   S_n  = 1 + \left( \left\langle A \right\rangle /n  \right)
   +\left( \left\langle A \right\rangle /n  \right)^2/2+ \cdots \,
   \vspace*{2mm}\\
   T_n  = \left[\, 1 + A_1 / n + A_1^2 / (2 n^2) + \cdots \, \right] \left[ \, 1 + A_2 / n + A_2^2 / (2 n^2) + \cdots \, \right]\vspace*{2mm} \\
  \left[ \, 1 + A_3 / n + A_3^2 / (2 n^2) + \cdots \, \right] \, ,
\end{array}
\end{equation}
one has
\begin{equation}
\left\| S_n - T_n \right\| \leq
\frac{\left\| A \right\|^3}{n^2}
\exp\left[ 2 \left\| A \right\|\right] \, .
\label{0order36}
\end{equation}
Substituting (\ref{0order34}) and (\ref{0order36}) into (\ref{1order5}) leads to (\ref{0order31}) and (\ref{0order32}).
\end{proof}
The proof indicates that the corrector is in order of $n^{-1}$ when taking the classical homogenization (zeroth order approximation) for a multilayered stack with an alternation of three layers, this can be adopted for the $m$ layers case. Correctors for higher order approximation, as well as for a multilayered stack consisting of an alternation of $m$ layers can be obtained by the same algorithm.

\section{Numerical calculations: Dispersion law and transmission curves}
\label{sec:num}
In this section, we would like to numerically investigate the asymptotic degree between the multilayered stack and its effective medium obtained from HOH algorithm, where the dispersion law and transmission curves are explored. According to the previous analysis, the transfer matrix defined by the exponential function of matrix $M$ is analytic, and it can be expanded as a Taylor series, taking the effective transfer matrix as an example,
\begin{equation}
T_{\rm eff} = \exp(-iM_{\rm eff}h)
=\sum_{p=0}^{\infty} (-i)^{2p} \dfrac{M_{\rm eff}^{2p}h^{2p}}{(2p)!}+\sum_{p=0}^{\infty} (-i)^{2p+1} \dfrac{M_{\rm eff}^{2p+1}h^{2p+1}}{(2p+1)!} \,.
\label{T}
\end{equation}
Considering a s-polarized incident wave, the column vector in (\ref{F}) is defined by
\begin{equation}
  {\widehat{\text{F}}}=[\, {\widehat{\text{E}}}_\perp, {\widehat{\text{H}}}_\parallel \,]^{\rm T}
\end{equation}
The matrix $M_{\rm eff}$ in (\ref{Meff}) is a 2 by 2 matrix, and
\begin{equation}
M_{\rm eff}^2 =
 \left[ \begin{array}{cc} \vspace*{1mm}
 i\omega K_\perp & -\omega \mu_\parallel \\
 -\omega \ep_\perp +\dfrac{\kpara^2}{\omega \mu_3} & -i\omega K_\perp
 \end{array} \right]^2 = k_{\rm eff}^2
  \left[ \begin{array}{cc} \vspace*{1mm}
 1 & 0\\
 0 & 1\\
 \end{array} \right]\,.
\label{Mte2}
\end{equation}
where
\begin{equation}
k_{\rm eff}^2=\omega^2 (\ep_\perp \mu_\parallel- K_\perp^2)-\dfrac{\mu_\parallel}{\mu_3} \kpara^2 \, .
\label{keff}
\end{equation}
Plugging (\ref{Mte2}) into (\ref{T}) and considering Taylor series of the $\sin-\cos$ functions
\begin{equation}
    \sin(A)=\sum_{n=0}^{\infty} \dfrac{(-1)^n}{(2n+1)!} A^{2n+1} \, , \quad
    \cos(A)=\sum_{n=0}^{\infty} \dfrac{(-1)^n}{(2n)!} A^{2n} \, ,
\label{sincos}
\end{equation}
we obtain
\begin{equation}
\begin{array}{ll}
T_{\rm eff}&=\cos(k_{\rm eff} h)-i\dfrac{M_{\rm eff}}{k_{\rm eff}} \sin(k_{\rm eff} h) \\
&=\left[ \begin{array}{cc} \vspace*{1mm}
 \cos(k_{\rm eff} h)+\omega K_\perp \dfrac{\sin(k_{\rm eff} h)}{k_{\rm eff}} &
 i\omega \mu_\parallel \dfrac{\sin(k_{\rm eff} h)}{k_{\rm eff}}\\
 i(\omega\ep_\perp-\dfrac{\kpara^2}{\omega}\dfrac{1}{\mu_3} )\dfrac{\sin(k_{\rm eff} h)}{k_{\rm eff}} &
\cos(k_{\rm eff} h)-\omega K_\perp \dfrac{\sin(k_{\rm eff}h)}{k_{\rm eff}}\\
 \end{array} \right] \,.
\end{array}
\label{Tte}
\end{equation}
Similarly, the transfer matrix of the dielectric layer $\mathcal{L}_m$ is
\begin{equation}
T_m=\left[\begin{array}{cc}
\cos{(\beta_m h_m)} & i\omega \mu_0 \dfrac{\sin(\beta_m h_m)}{\beta_m}\\
i(\omega\ep_m-\dfrac{\kpara^2}{\omega}\dfrac{1}{\mu_0})\dfrac{\sin(\beta_m h_m)}{\beta_m} & \cos{(\beta_m h_m)}
\end{array}\right], \quad \beta_m^2=\omega^2\ep_m \mu_m-\kpara^2 .
\label{tp}
\end{equation}
The transfer matrix $T$ of the unit cell consisting of two dielectric layers is derived from the above expression as
\begin{equation}
  T=T_2 T_1 \,.
\label{Tunit}
\end{equation}

\subsection{Dispersion law}
A general expression for the dispersion law in a periodic structure is defined by the trace of transfer matrix $T$ of a single period \cite{Keldysh88,Ivchenko91,Deutsch95}. Since the eigenvalues and eigenvectors of $T$ (and thus any power of $T$) are the Bloch wave vectors and Bloch states of the periodic structure (in the limit of infinite $n$), further physical insight can be achieved in the single period matrix $T$. Hence, for a multilayered stack with two layers, we have
\begin{equation}
  {\rm tr(T)}/2 = \cos{(\beta_1 h_1)}\cos{(\beta_2 h_2)}-\dfrac{1}{2}(\dfrac{\beta_2}{\beta_1}+\dfrac{\beta_1}{\beta_2})\sin{(\beta_1 h_1)}\sin{(\beta_2 h_2)}
\label{trTstack}
\end{equation}
while for the effective medium,
\begin{equation}
  {\rm tr(T_{\rm eff})}/2 = \cos(k_{\rm eff} h)\,.
\label{trTeff}
\end{equation}
Here, $k_{\rm eff}$ defined by (\ref{keff}) can be obtained by substituting the expressions of the effective permittivity, permeability and bianisotropy, which are derived from the HOH algorithm.

Considering a normal incident plane wave ($\kpara=0$), the s-polarization and p-polarization coincide, since $\ep_\parallel=\ep_\perp$, $\mu_\parallel=\mu_\perp$, $K_\parallel=K_\perp$. Hence we take a s-polarized incident wave as an example, and assume the two dielectric layers of the stack are Glass and Silicon, respectively; the relative permittivities are $\ep_1=2$, $\ep_2=12$ and the filling fraction are $f_1=0.8$, $f_2=0.2$. For the sake of illustration, in the HOH algorithm, we take the 3rd, 7th and 19th order approximation for the effective medium. The curves of the effective permittivity, permeability and bianisotropy in 19th order approximation versus normalized frequency are depicted in figure \ref{figure3}(a), where $\hat \omega=\omega h/c$, the expressions of these effective parameters are omitted here to save space. It is observed that all three curves are increasing along with the frequency, wherein the effective permeability (dash red line) has values greater than 1, and effective bianisotropy (dotted-dash blue line) is non-vanishing. In other words, artificial magnetism and bianisotropy can be achieved from dielectrics through HOH asymptotics, as it has been predicted in the theoretical analysis of Section \ref{sec:hoh}.
\begin{figure}[!htb]
    \centering
    \subfigure[]{\label{fig3a}
    \includegraphics[width=0.48\textwidth]{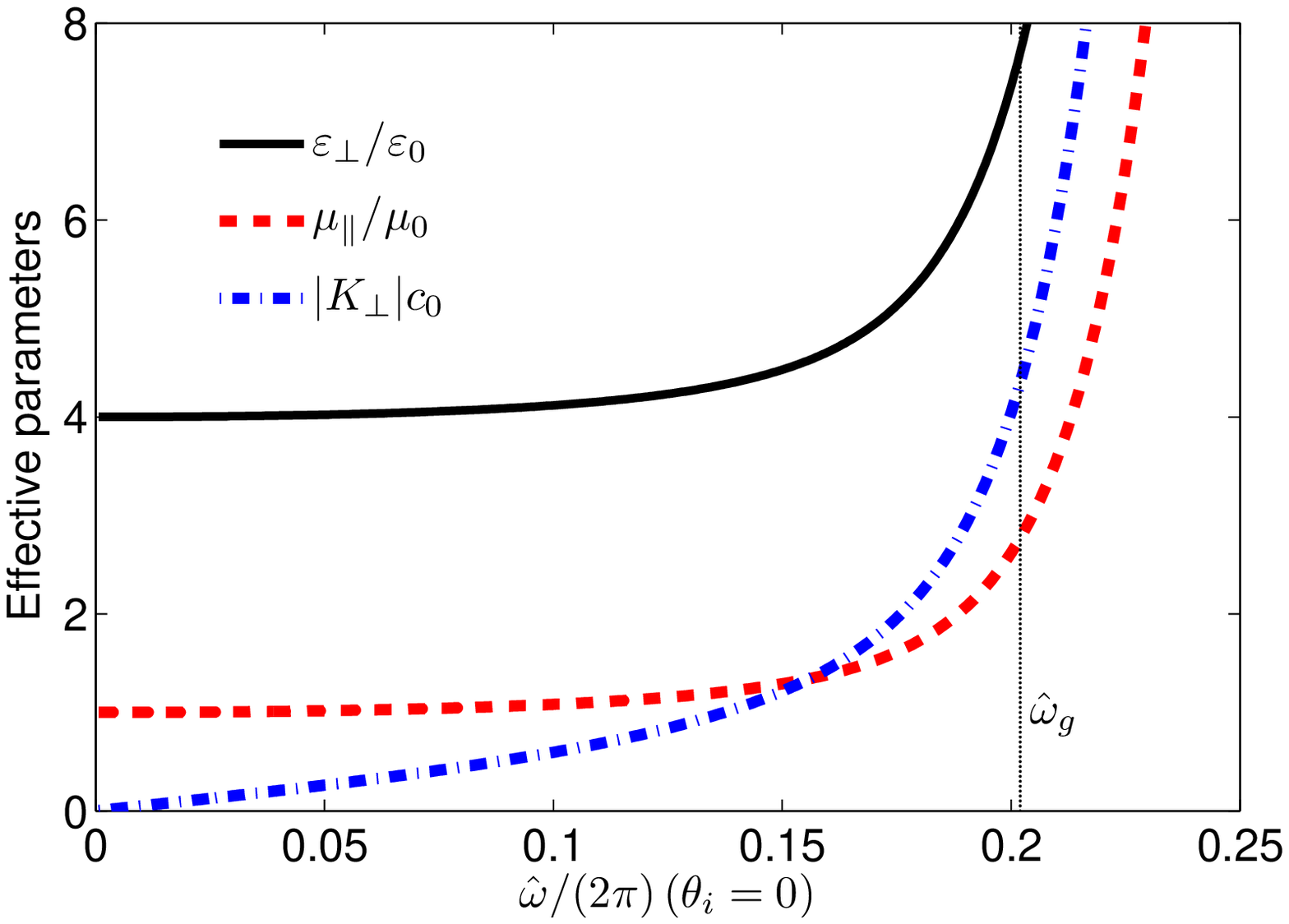}}
    \subfigure[]{\label{fig3b}
    \includegraphics[width=0.48\textwidth]{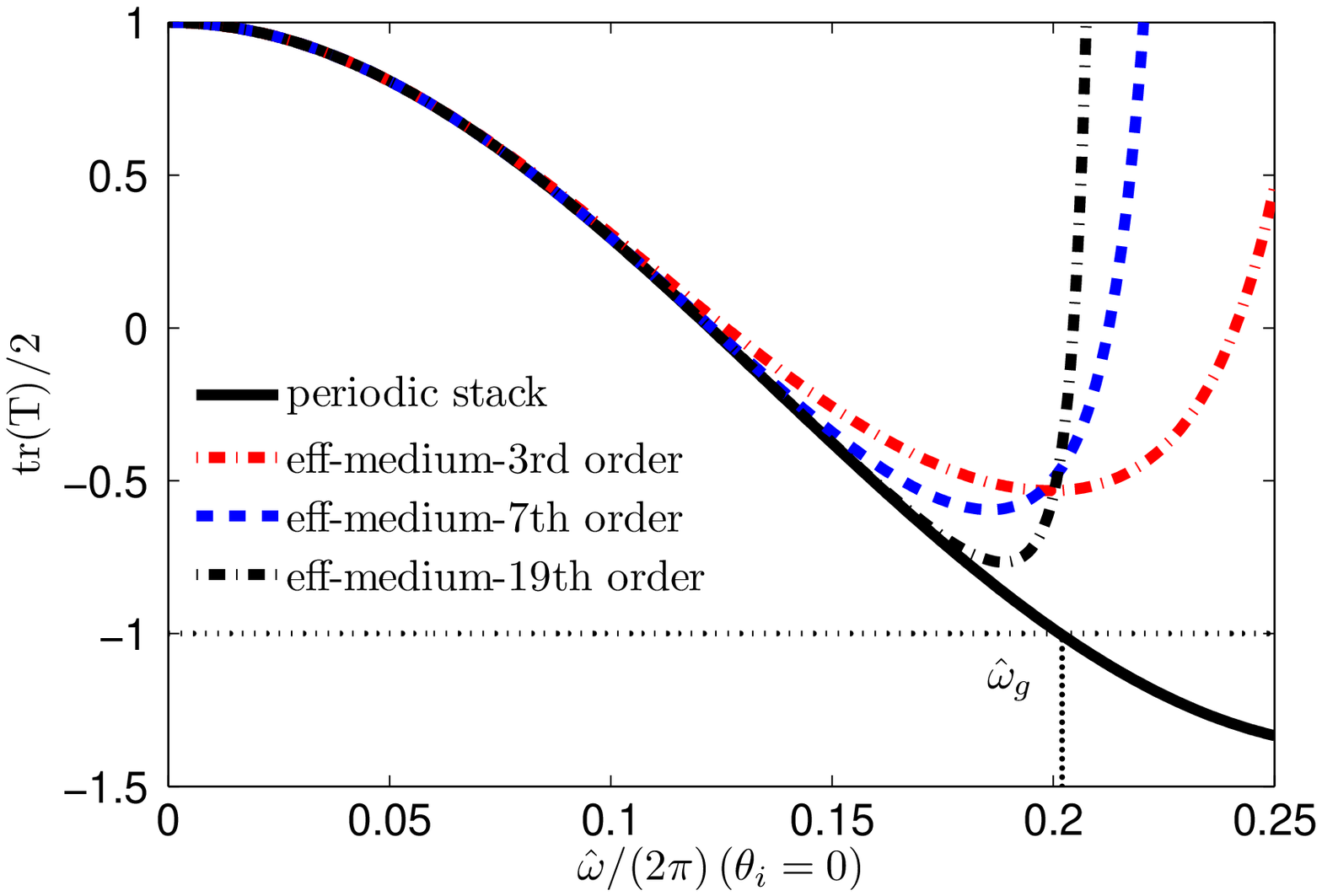}}
    \caption{(a) The curves of effective permittivity (solid black line), permeability (dash red line) and bianisotropy (dotted-dash blue line) in 19th order approximation; (b) Dispersion laws of the multilayered stack (solid black line) and its effective medium in 3rd order (dotted red line), 7th order (dash blue line) and 19th order (dotted-dash black line). HOH breaks down at $\hat{\omega}_g=0.202$, lower edge of the first stop band.}
     \label{figure3}
\end{figure}

Figure \ref{figure3}(b) shows the dispersion law of the stack, as well as that of the effective medium in different order approximations, which is obtained by substituting the expressions of the effective parameters into $k_{\rm eff}$ of (\ref{keff}) and then (\ref{trTeff}). The lower edge of the first stop band is denoted by $\hat \omega_g = 0.202$, where ${\rm tr(T)}/2=-1$ \cite{Lekner94}. It should be noted that a good agreement between the dispersion laws of the stack and effective medium can be observed at the lower frequency band, and the asymptotics of these two curves can be improved by taking higher orders of approximation, e.g. 19th order approximation (dotted-dash line) shows a better asymptotic with the multilayers (solid line) from zero frequency (quasi-static limit) up to a normalized frequency around $0.18$.

Moreover, an oblique incident plane wave in s-polarization as well as in p-polarization are also analyzed, wherein the same parameters of the stack are taken and the incident angle is $\theta_i=30^\circ$, the dispersion laws of the multilayered stack and the effective medium are shown in figure \ref{figure4}. The 3rd and 7th order approximations are considered in the HOH algorithm, similarly, asymptotics improve in conjunction with higher orders of approximation for both s- and p-polarized waves.
\begin{figure}[!htb]
    \centering
    \subfigure[s-polarization]{\label{fig4a}
    \includegraphics[width=0.48\textwidth]{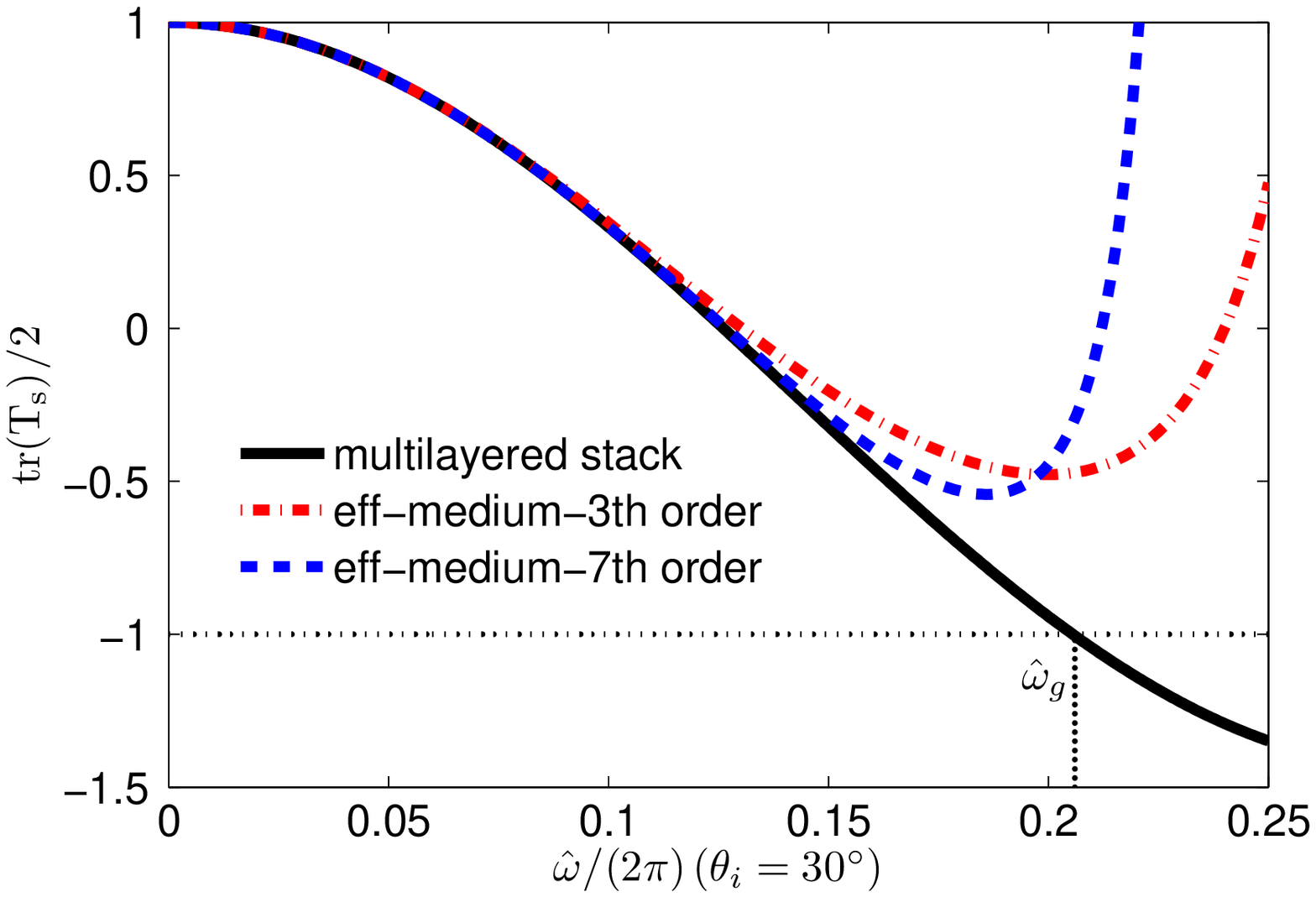}}
    \subfigure[p-polarization]{\label{fig4b}
    \includegraphics[width=0.48\textwidth]{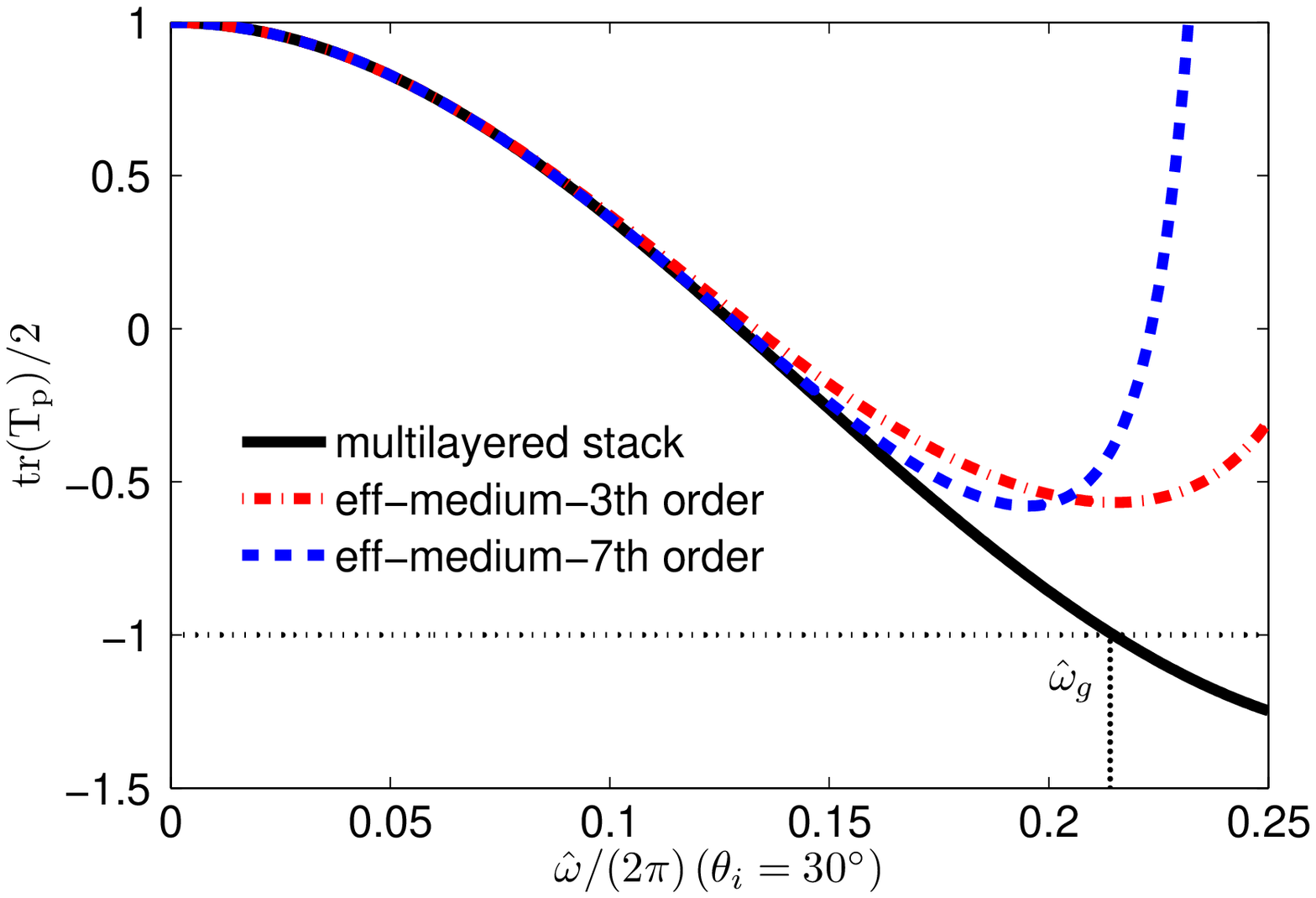}}
    \caption{Dispersion laws of the stack (solid black line) and its effective medium in 3rd order (dotted red line), and 7th order (dash blue line) approximation, under an oblique incident wave with $\theta_i=30^\circ$, (a) s-polarization with $\hat \omega_g=0.206$, (b) p-polarization with $\hat \omega_g=0.215$. }
     \label{figure4}
\end{figure}

\subsection{Transmission curve}
Apart from the dispersion law, asymptotics between the transmission curves of the multilayered stack and the effective medium is another important feature to be checked.
Figure \ref{figure5} shows a schematic diagram of the reflection and transmission for an incident wave $\text{U}^i$ on the effective medium, the thickness of which is denoted by $nh$, the upper and lower spaces of the effective medium are supposed to be vacuum with permittivity $\varepsilon_0$ and permeability $\mu_0$.
\begin{figure}[!htb]
    \centering
    \includegraphics[width=0.4\textwidth]{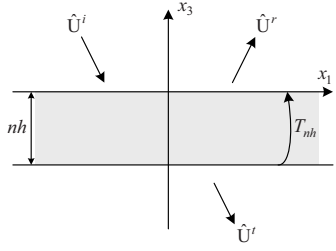}
    \caption{Schematic diagram of the reflection and transmission for an incident wave on a periodic media with thickness $nh$, the upper and lower spaces are vacuum. }
    \label{figure5}
\end{figure}

We assume the incident wave in form of Fourier decomposition in (\ref{Fourier}) is
\begin{equation}
  \widehat{\text{U}}^i= \text{U}_0 \exp[-ik_3 x_3],\quad x_3 \geq 0,
\label{int}
\end{equation}
where $k_3^2 = \omega^2 \varepsilon_0 \mu_0 - k_1^2 - k_2^2$, and $\text{U}_0$
the amplitude of the incident electromagnetic field.
The reflection and transmission waves are
\begin{equation}
\begin{array}{ll}
  \widehat{\text{U}}^r=\text{U}_0 \,  r \exp[i k_3 x_3],& x_3 \geq 0 \\[3mm]
  \widehat{\text{U}}^t=\text{U}_0 \,  t \exp[-i k_3 (x_3+nh)],& x_3 \leq -nh
\end{array}
\end{equation}
For a polarizable incident wave, the column vectors ${\widehat{\text{F}}}$ defined in (\ref{F}) are
\begin{equation}
\begin{array}{ll}
  {\rm s-polarization:} & {\widehat{\text{F}}}=[\, {\widehat {\text E}}_\perp, {\widehat {\text H}}_\parallel \,]^{\rm T} \\[2mm]
  {\rm p-polarization:} & {\widehat{\text{F}}}=[\, {\widehat {\text E}}_\parallel,{\widehat {\text H}}_\perp \,]^{\rm T}
  \label{newF}
\end{array}
\end{equation}
while for the upper and lower vacuum of the effective medium, (\ref{newF}) will be simplified as
\begin{equation}
  {\widehat{\text{F}}}=\left[\begin{array}{c}
    \widehat{\text{U}} \\
    i(\omega v_0)^{-1} \partial_{{x_3}} \widehat{\text{U}}
  \end{array}
  \right]
\end{equation}
with
\begin{equation}
\begin{array}{ll}
  {\rm s-polarization:} & \widehat{\text{U}}={\widehat {\text E}}_\perp, \, v_0=\mu_0 \,,\\[2mm]
  {\rm p-polarization:} & \widehat{\text{U}}={\widehat {\text H}}_\perp, \, v_0=\ep_0 \,.
  \label{newU}
\end{array}
\end{equation}
Assume $T_{n h}$ is the transfer matrix of the effective medium with thickness $nh$, we have
\begin{equation}
  {\widehat{\text{F}}}(0)=T_{nh} {\widehat{\text{F}}}(-nh)
\label{Ftrans}
\end{equation}
with
\begin{equation}
  T_{n h}=T^n=\left[\begin{array}{ll}
  t_{11}C_{n-1}(a)-C_{n-2}(a) & t_{12}C_{n-1}(a) \\
  t_{21}C_{n-1}(a) & t_{22}C_{n-1}(a)-C_{n-2}(a)
  \end{array}
  \right]
\label{Tnh}
\end{equation}
where $a=(t_{11}+t_{22})/2={\rm tr} (T)/2$, $T$ the transfer matrix of unit cell. And $C_n$ is the Chebyshev polynominals of the second kind \cite{Abeles50}
\begin{equation}
  C_n(a)= \dfrac{\sin[(n+1)\cos^{-1}a]}{\sqrt{1-a^2}}\,.
\end{equation}
Again, if we consider a s-polarized normal incidence, then $\widehat{\text{U}}={\widehat {\text E}}_\perp$ and $ {\widehat {\text H}}_\parallel=i(\omega \mu_0)^{-1} \partial_{x_3} {\widehat {\text E}}_\perp$, (\ref{Ftrans}) turns to be
\begin{equation}
 \left[\begin{array}{c}
    1+r \\
    \dfrac{k_3}{\omega \mu_0} (1-r)
  \end{array}
  \right]= T^{(\rm s)}_{ n h} \left[\begin{array}{c}
    t \\
    \dfrac{k_3}{\omega \mu_0} t
  \end{array}
  \right]
\end{equation}
and the transmission coefficient is
\begin{equation}
  t_{\rm s}=\dfrac{1}{(T^{(\rm s)}_{11}+T^{(\rm s)}_{22})/2+[T^{(\rm s)}_{21}+(k_3/(\omega\mu_0))^2 T^{(\rm s)}_{12}]/2}\, .
\label{ts}
\end{equation}
According to the duality between s- and p- polarization, one only needs to replace $\mu_0$ by
$\ep_0$, as well as $T_{n h}^{(\rm s)}$ by $T_{nh}^{(\rm p)}$ in (\ref{ts}) to obtain the transmission coefficient for p-polarization, e.g.
\begin{equation}
  t_{\rm p}=\dfrac{1}{(T^{(\rm p)}_{11}+T^{(\rm p)}_{22})/2+[T^{(\rm p)}_{21}+(k_3/(\omega\ep_0))^2 T^{(\rm p)}_{12}]/2}\,.
\end{equation}
Let us consider again a multilayered stack with $\ep_1=2$, $\ep_2=12$, $f_1=0.8$ and $f_2=0.2$ as an example, the thickness of the structure is supposed to be $nh$ with $n$ a constant, i.e. $n=20$, to allow a numerical calculation in Matlab. Note that the s- and p-polarized incident waves coincide under a normal incidence, i.e. $t_s=t_p=t$.
Applying $T=T_{\rm eff}$ as shown in (\ref{Tte}), and $T=T_2T_1$ with $T_m$ in (\ref{tp}) to (\ref{Tnh}) and (\ref{ts}), the transmission curves of the effective medium in 7th (dotted-dash red line) and 19th (dash blue line) order approximation are shown in figure \ref{figure6}, as well as the transmission curve of the stack (solid line).

It can be observed that the lower order approximation (dotted-dash red line) only fits well with the curve of the multilayer (solid black line) in the low frequency band; while an improved asymptotic (dash blue line) can be achieved by higher order approximation (e.g. 19th order). In agreement with the dispersion law, the asymptotics between the two transmission curves of multilayer (solid line) and effective medium in 19th order approximation (dash line) become invalid near the lower edge of the first stop band.
\begin{figure}[!htb]
    \centering
    \includegraphics[width=0.5\textwidth]{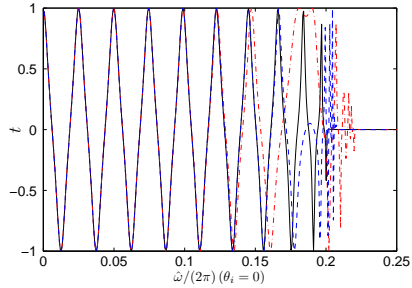}
    \caption{Transmission curves of the multilayered stack (solid line) and effective medium in 7th order (dotted-dash red line) and 19th order (dash blue line) of approximation.}
    \label{figure6}
\end{figure}

Similarly, the transmission curves of the stack and effective medium under an oblique incidence with $\theta_i=30^\circ$ in s-polarization, as well as p-polarization are shown in figure \ref{figure7}. Once more, the asymptotic approximation between the two transmission curves of the stack and the effective medium is quite good in the low frequency band, and improves with higher order approximation in effective medium as shown in the dispersion laws of figure \ref{figure4}.
\begin{figure}[!htb]
    \centering
    \subfigure[s-polarization]{\label{fig7a}
    \includegraphics[width=0.48\textwidth]{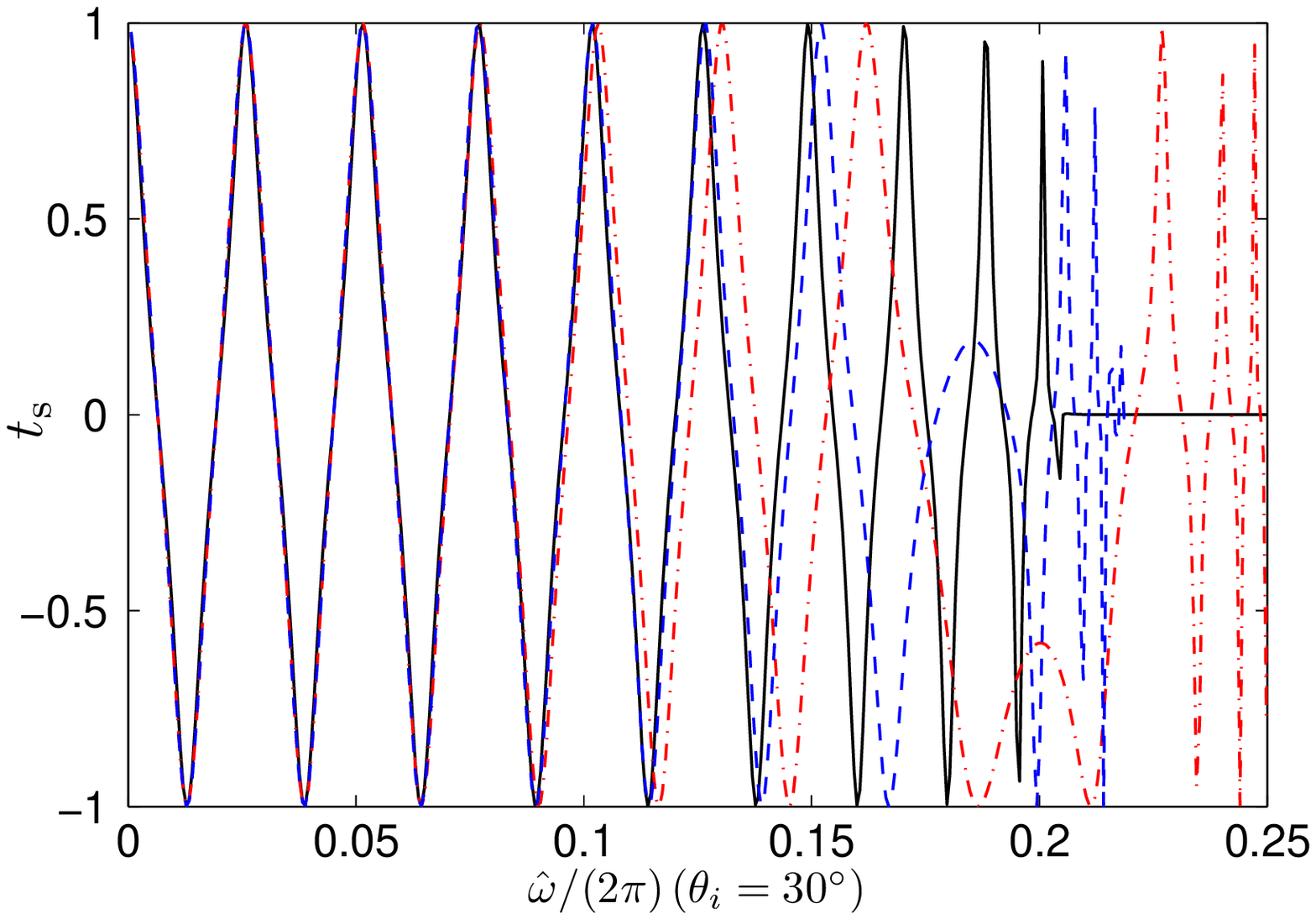}}  
    \subfigure[p-polarization]{\label{fig7b}
    \includegraphics[width=0.48\textwidth]{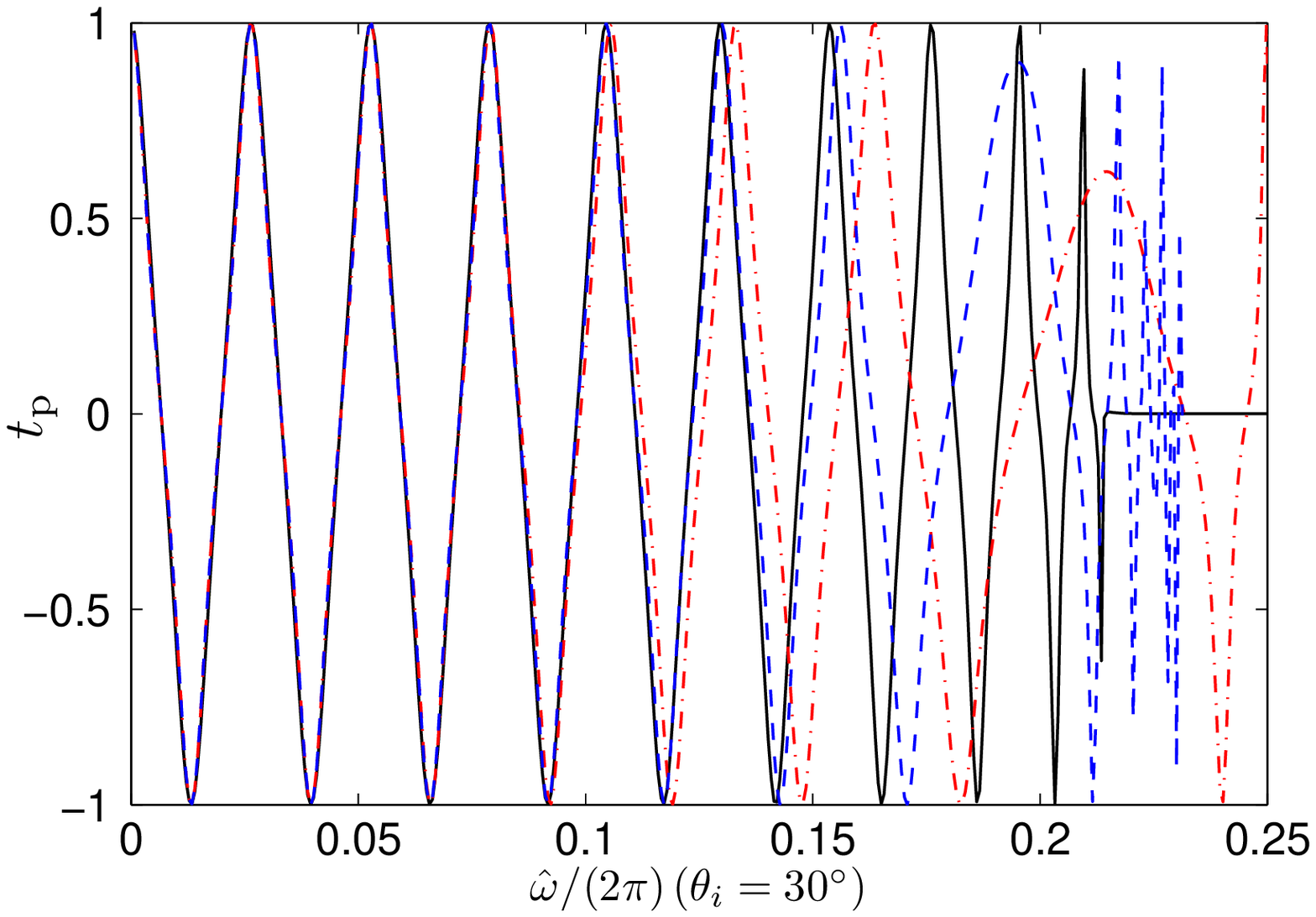}}
    \caption{ Transmission curves of the multilayered stack (solid line) and effective medium in 3rd order (dotted-dash red line) and 7th order (dash blue line) of approximation: (a) s-polarization, (b) p-polarization}
     \label{figure7}
\end{figure}

\subsection{On logarithm of transfer matrix and analyticity}
From figures \ref{figure4}-\ref{figure7}, it should be noted that asymptotics break down around the lower edge of the first stop band $\hat \omega_g$ between the dispersion laws and the transmission curves of the multilayer and its effective medium, no matter how high the order of approximation is. This invalidity is contributed to the power series expansion of $X=iM_{\rm eff}h$ in (\ref{BCH}) which diverges at $\hat \omega_g$. Indeed, we choose BCH formula to obtain the approximation for matrix $M_{\rm eff}$ furthermore for effective permittivity, permeability and bianisotropy, where we have taken $X=\log\{\exp[A]\exp[B]\}$ in (\ref{BCH}). In complex analysis, a branch of $\log(z)$ is a continuous function $L(z)$ defined on a connected open subset $G$ of the complex plane, such that $L(z)$ is a logarithm of $z$ for each $z$ in $G$ \cite{Sarason07}. An open subset $G$ is chosen as the set $\C-\R_{\leq0}$ obtained by removing the branch cut (thick solid line) along the negative real axis and the branch point (empty point) $z=0$ from the complex plane, as shown in figure \ref{figure8}(b).
\begin{figure}[!htb]
  \centering
  \includegraphics[width=0.9\textwidth]{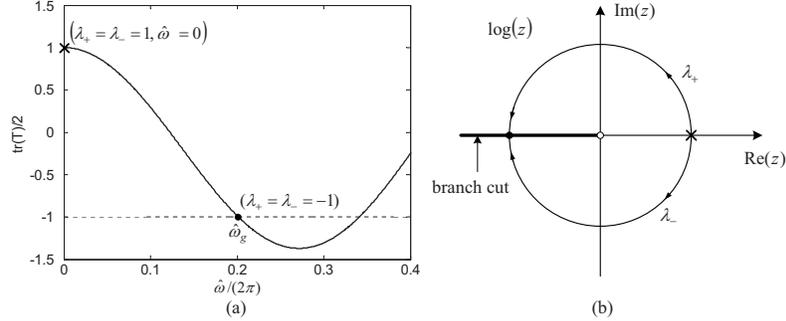}
  \caption{(a) Curve of $\rm tr(T)/2$ versus frequency for the mutlilayered stack consisting of an alternation of two dielectric layers; the eigenvalues of the transfer matrix are denoted by $\lambda_\pm$ and vary from $+1$ (cross sign) to $-1$ (filled point) in the lower pass band; while $\rm tr(T)/2=-1$ defines the edge of the first stop band; (b) Schematic diagram of a branch of $\log(z)$, a region $G$ (connected open subset of the complex plane) can be typically obtained by removing from $\C$ the interval $(-\infty,0]$. For the logarithm of $\lambda_\pm$, we choose the upper and lower half circle paths for $\lambda_\pm$ to approach $-1$ at first edge of stop band (filled point) lying on the negative real axis. }
\label{figure8}
\end{figure}

On the other hand, the transfer matrix $T=\exp[A]\exp[B]$ can be factorized by eigendecomposition as
\begin{equation}
  T= Q \Lambda Q^{-1}
\label{decom}
\end{equation}
where $Q$ is a square matrix consisting of the eigenvectors of $T$, and $\Lambda$ is a diagonal matrix whose components are the eigenvalues (denoted by $\lambda_\pm$), and
\begin{equation}
  \lambda_\pm = a \pm i \sqrt{1-a^2}
\end{equation}
with $a={\rm tr(T)}/2$ the half trace of transfer matrix $T$. Furthermore, in order to derive the effective matrix $M_{\rm eff}$ for those effective parameters, we take the logarithm of $T$
\begin{equation}
  M_{\rm eff}=\log\{T\}=Q {\rm diag}\,[\log{(\lambda_+)},\log{(\lambda_-)}] Q^{-1}\,.
\label{logM}
\end{equation}
The curve of ${\rm tr(T)}/2$ versus frequency is depicted in figure \ref{figure8}(a), and one has $\lambda_+=\lambda_-=a=+1$ at the zero frequency (denoted by cross sign), while $\lambda_+=\lambda_-=a=-1$ at the edge of the first stop band $\hat\omega_g$ (denoted by filled point). In the lower frequency band, we choose $\lambda_+$ and $\lambda_-$ (starting from $+1$) which approach $-1$ by the upper half-circle path and the lower half-circle path, respectively, where the paths lie in the open subset $G$: $\log(\lambda_\pm)$ is always analytic and unique. However, at the edge of the stop band where $\lambda_+=\lambda_-=a=-1$ on the negative real axis, the logarithm is no longer analytic when its arguments meet at the branch cut of the logarithm: This implies that an expression of $M_{\rm eff}$ as a power series of the frequency $\omega$ has its radius of convergence bounded by $\hat \omega_g$. In other words, the effective parameters lose their efficiency for the asymptotic approximation at frequencies higher than the first stop band, but they work just fine in the lower pass band. In order to achieve all frequency homogenization for a periodic structure, a new set of effective parameters (e.g. effective refractive index and surface impedance) should be introduced \cite{yan12}, where the analytic property of transfer matrix in the complex plane is ensured.

\section{Frequency power expansion of the transfer matrix}\label{sec:Tseries}
Although the function $\log\{\exp[A]\exp[B]\}$ is no longer analytic at the lower edge of the stop band, the transfer matrix $T=\exp[A]\exp[B]$ is analytic in the whole complex plane and can be approached by a power series at any frequency, a fact which will be numerically checked in this section. In mathematics, an exponential function can be approximated by a Taylor series as
\begin{equation}
  \exp[A]=\sum\limits_{n=0}^{\infty} \dfrac{A^n}{n!}=1+A+\dfrac{A^2}{2!}+\dfrac{A^3}{3!}\cdots
\end{equation}
Hence, the transfer matrix $T$ takes the form
\begin{equation}
\begin{array}{ll}
  T&=T_1T_2=\exp[iM_1h_1]\exp[iM_2h_2]=\exp[\omega N_1]\exp[\omega N_2]  \\[2mm]
  &=\left(1+\omega N_1+\dfrac{(\omega N_1)^2}{2!}+\dfrac{(\omega N_1)^3}{3!}+\cdots\right)
  \left(1+\omega N_2+\dfrac{(\omega N_2)^2}{2!}+\dfrac{(\omega N_2)^3}{3!}+\cdots\right)
\end{array}
\label{T1}
\end{equation}
with notation $\omega N_i=i M_i h_i$.
Let us expand and organize (\ref{T1}) in powers of $\omega$,
\begin{equation}
  T_{\rm eff}=T_1T_2=1+\omega (N_1+N_2)+\omega^2 (\dfrac{N_1^{2}}{2}+\dfrac{N_2^{2}}{2}+N_1 N_2)+\cdots
\label{T2}
\end{equation}
We collect the terms from $\omega^0$ to $\omega^p$ in (\ref{T2}) as an ansatz for $T_{\rm eff}$
\begin{equation}
  T_{\rm eff}\simeq T^{(0)}+\omega T^{(1)}+\omega^2 T^{(2)}+\omega^3 T^{(3)}+\cdots+\omega^p T^{(p)}\, .
\label{Texp}
\end{equation}
Obviously, with increasing $p$, the approximation in (\ref{Texp}) becomes more accurate.

Considering a normal incident wave in s-polarization, the matrices $N_m$ read as
\begin{equation}
  N_m=i h_m \left[
  \begin{array}{cc}
   0 & \mu_0 \\
   -\ep_m & 0 \\
  \end{array}
  \right], \quad m=1,\, 2
\end{equation}
so that substituting them into (\ref{Texp}) the half trace of $T_{\rm eff}$ can be expressed as
\begin{equation}
\begin{array}{ll}
{\rm tr(T_{\rm eff})}/2&= 1 - \dfrac{\hom^2}{2} \, (\ep_1 f_1 + \ep_2 f_2)
+ \dfrac{\hom^4}{24} (\ep_1 f_1 + \ep_2 f_2)^2 - \dfrac{\hom^4}{24} (\ep_1 - \ep_2)^2 f_1^2 f_2^2   \\[3mm]
&- \dfrac{\hom^6}{6!} (\varepsilon_1 f_1+\varepsilon_2f_2)^3
+ \dfrac{\hom^6}{72} f_1^2 f_2^2 (\varepsilon_1-\varepsilon_2)^2 (f_1-f_2)(\varepsilon_1 f_1-\varepsilon_2f_2) \\[3mm]
&+ \dfrac{{\hat\omega}^6}{144}  f_1^2 f_2^2 (\varepsilon_1-\varepsilon_2)^2 (\varepsilon_1 f_1+\varepsilon_2f_2)
+\dfrac{{\hat\omega}^6}{24}  f_1^3 f_2^3 (\varepsilon_1-\varepsilon_2)^2(\varepsilon_1+\varepsilon_2)   \\[3mm]
&+\dfrac{{\hat\omega}^6} {60}  f_1^2 f_2^2 (\varepsilon_1-\varepsilon_2) \left[(\varepsilon_2^2f_2-\varepsilon_1^2f_1)
+(\varepsilon_2f_1-\varepsilon_1f_2)(\varepsilon_1f_1+\varepsilon_2f_2)\right] + \cdots
\label{disp}
\end{array}
\end{equation}
where the normalized frequency $\hom=\omega h/c$. It is noted that there are only terms containing even order of $\hom$, which is due to the fact that the $T^{(2p+1)}$ in (\ref{Texp}) are all off-diagonal matrices with zero diagonal components.

The curves of $\rm tr(T)/2$ versus frequency are shown in figure \ref{figure9}(a): The thick solid line represents the half trace of transfer matrix for a multilayer consisting of an alternation of two dielectric layers with same parameters as assumed in Section \ref{sec:num}, the dash green line, dotted red line, dotted-dash blue line and the thin solid line are ${\rm tr(T_{\rm eff})}/2$ with $p=2,4,6,8$ in (\ref{Texp}) for $T_{\rm eff}$, respectively.
\begin{figure}[!htb]
    \centering
    \subfigure[]{
    \includegraphics[width=0.48\textwidth]{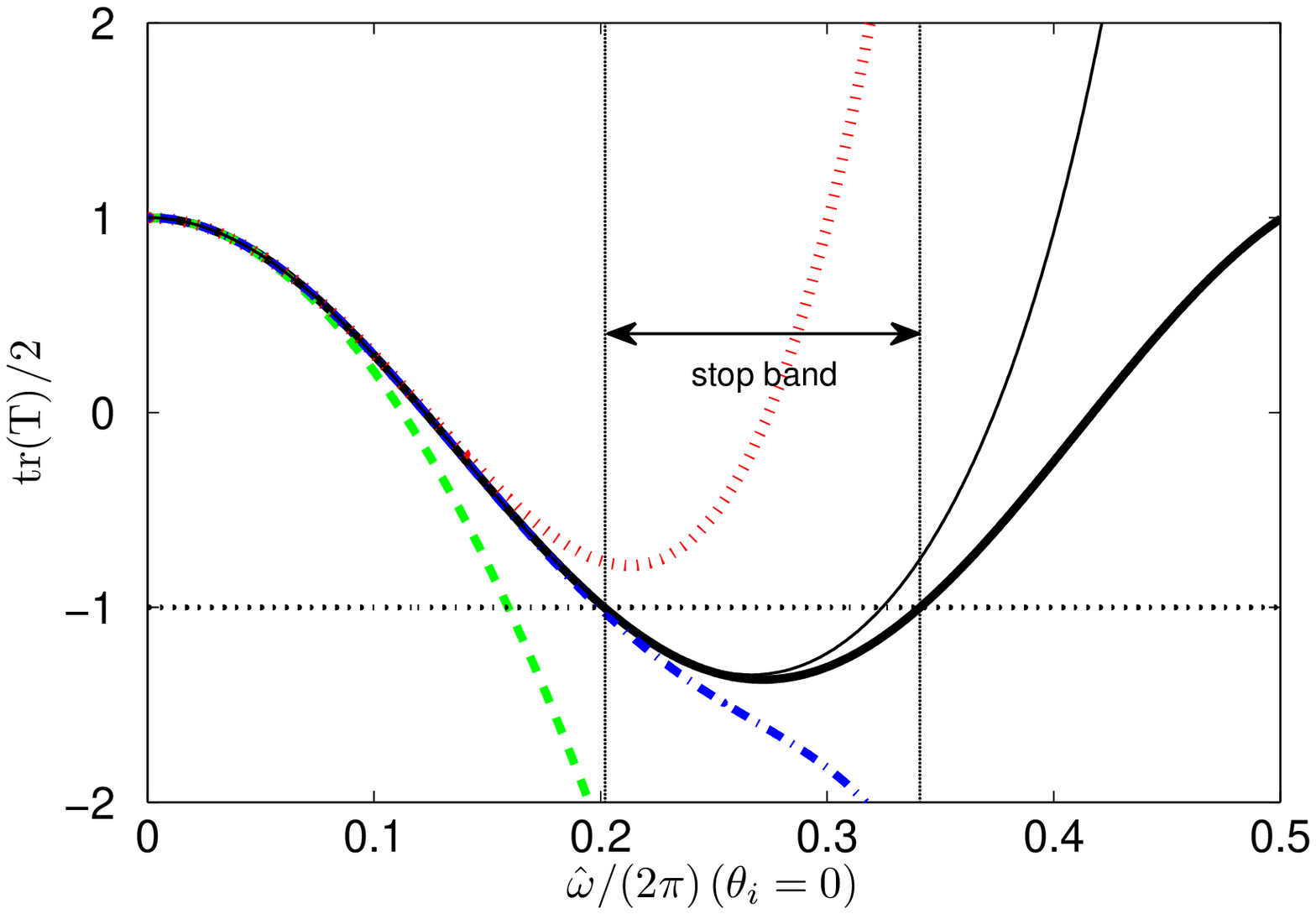}}
    \subfigure[]{
    \includegraphics[width=0.48\textwidth]{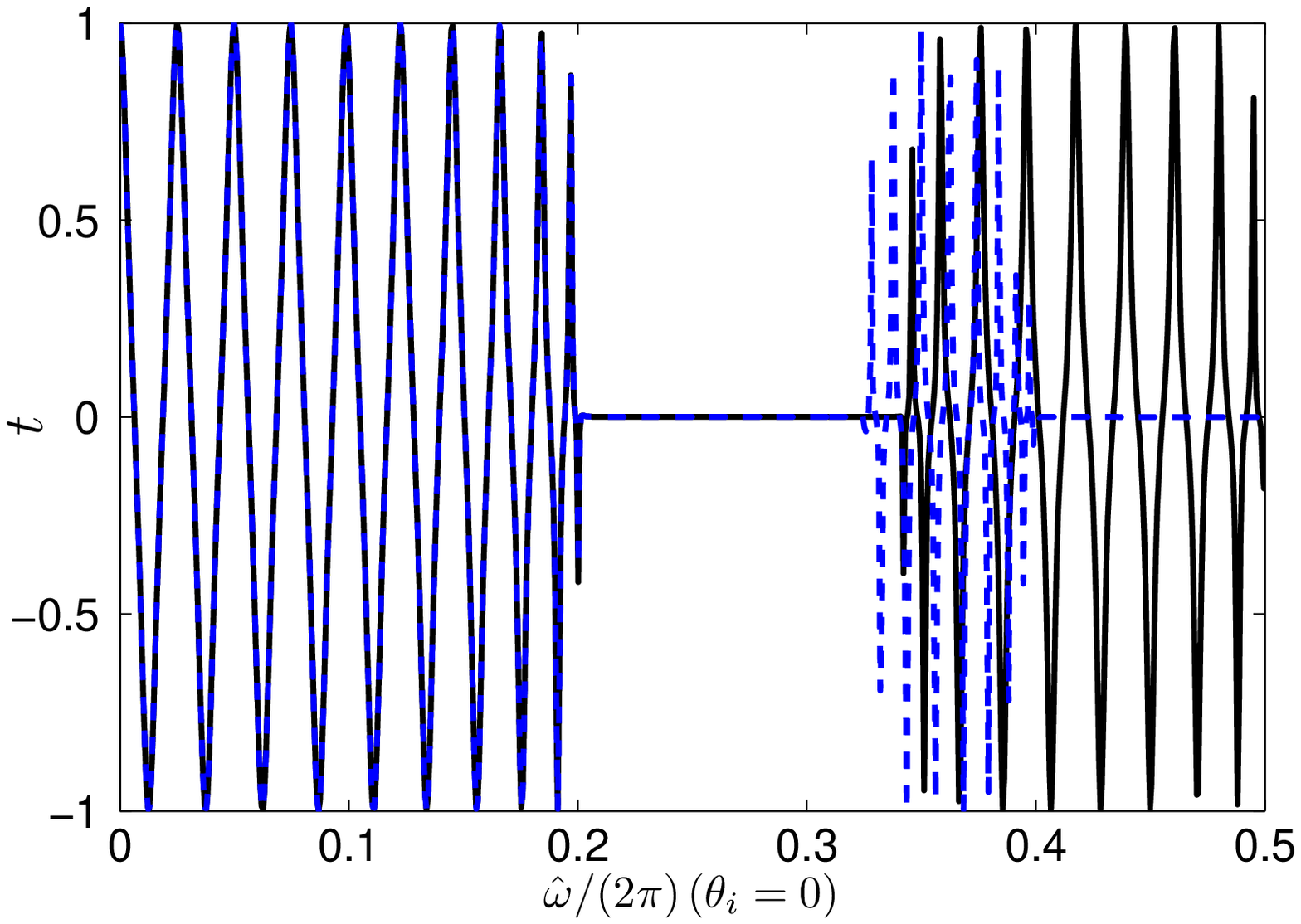}}
    \caption{(a) Curves of $\rm tr(T)/2$ versus frequency for the multilayered stack (thick solid line) and the effective medium under a normal s-polarization incidence, the dotted, dash, dotted-dash and thin solid lines correspond to the approximation for $T_{\rm eff}$ in (\ref{Texp}) by taking $p=2, 4, 6 ,8$. (b) Transmission curves of the multilayered stack (solid line) and the effective medium (dash line) with $p=8$ in (\ref{Texp}). Parameters of the dielectric layers of the stack are assumed to be the same as in Section \ref{sec:num}.}
    \label{figure9}
\end{figure}

By comparison with the curve of the multilayered stack (the thick solid line), it can be observed that the approximation for $T_{\rm eff}$ in (\ref{Texp}) with $p=2$ (dash green line) is just efficient in a range of low frequency, while $p=4$ (dotted line) gives a sharper estimate for the multilayered stack, but it totally misses the stop band. Moreover, if we push the approximation to $p=6$, the dispersion curves are nearly superimposed up to the edge of the first stop band, so well beyond the range of validity of classical homogenization \cite{Bensoussan78}. However, the approximation with $p=6$ (dotted-dash curve, which is always decreasing) breaks down at the lower edge of the stop band. In order to better approximate $T_{\rm eff}$, one needs to push the approximation to the next even number of $p$ (i.e. $p=8$, the thin solid line), which changes the curvature and gives a sharper estimate in the stop band region, although its intersection with the horizontal axis defines an approximate position for the upper edge of the stop band. This can be improved by taking larger $p$ in (\ref{Texp}). Altogether, the larger $p$ taken in (\ref{Texp}), the more accurate the approximation between the dispersion curves of the effective medium and that of the multilayered stack.

Moreover, according to the expression of the transmission coefficient in (\ref{ts}), we take $p=8$ in (\ref{Texp}) for $T_{\rm eff}$, the transmission curves for both the multilayer (solid line) and the effective medium (dash line) are depicted in figure \ref{figure9}(b). A good agreement between these two curves can be observed up to the first stop band, as predicted in figure \ref{figure9}(a). Similar calculation can be applied to an oblique incidence. This demonstrates that the transfer matrix of the effective medium can be approached as a frequency power series at any frequency.

\section{Concluding remarks} \label{sec:ccs}
We provide a rigorous high-order homogenization (HOH) algorithm for one-dimensional moderate contrast photonic crystals, where the period of the structure approaches the wavelength of optical waves. From an expression of transfer matrices in terms of exponential functions, S. Lie and BCH formulae are applied in the HOH asymptotic. Analytic expressions of the effective parameters are derived for a stack with two layers in Section \ref{sec:hoh}, where the artificial magnetism and magnetoelectric coupling effect are achieved in such a moderate contrast periodic structure. In Section \ref{sec:bch}, we explore the extension of HOH algorithm to a stack with an alternation of $m$ dielectric layers, and derive the expressions of the effective parameters for a center symmetric stack : The magnetoelectric coupling vanishes while the artificial magnetism can be achieved with non-resonant periodic structures. Furthermore, the corrector for the asymptotic approximation of a finite stack by its effective medium has been discussed in Section \ref{sec:corc}: The asymptotic error is of order $1/n^p$ with $p$ the order of the approximation. Finally, based on the expressions of the effective parameters, we numerically validate our approximation method by comparing both the dispersion law and transmission property of the stack and its effective medium in Section \ref{sec:num}. The good agreement between these curves demonstrates that the asymptotic approximation is efficient throughout the lower pass band, while at the edge of first stop band the logarithm function is no longer analytic. Finally, we investigate the approximation for the transfer matrix instead of matrix $M_{\rm eff}$ of the effective medium by a frequency power expansions, the dispersion law as well as the transmission curves of the transfer matrices for both the multialyer and the effective medium are explored in Section \ref{sec:Tseries}, and the excellent agreement confirms that the effective transfer matrix can be approached by a power series at any frequency.


\begin{thebibliography}{39}

\bibitem{Brien02}
O'Brien S, Pendry JB. 2002 \newblock Photonic band-gaps effects and magnetic activity in dielectric composites.
\newblock {\em J. Phys. Condens. Mat.} 14, 4035--4044.

\bibitem{Chered06}
Cherednichenko K, Smyshlyaev VP, Zhikov VV. 2006
\newblock Non-local homogenised limits for composite media with highly anisotropic periodic fibres.
\newblock {\em Proceedings of the Royal Society of Edinburgh: Section A} 136, 87--114.

\bibitem{Felbacq04}
Bouchitt\'e G, Felbacq D. 2004
\newblock Homogenization near resonances and artificial magnetism from dielectrics.
\newblock {\em Comptes Rendus Math\'ematique} 339, 377--382.

\bibitem{Zhikov00}
Zhikov VV. 2000
\newblock On an extension of the method of two-scale convergence and its applications.
\newblock {\em Sb. Math.} 191, 973--1014.

\bibitem{Pendry99}
Pendry JB, Holden AJ, Robins DJ, Stewart WJ. 1999
\newblock Magnetism from conductors and enhanced nonlinear phenomena.
\newblock {\em IEEE Trans. Microw. Theory Tech.} 47, 2075--2084.

\bibitem{Bensoussan78}
Bensoussan A, Lions JL, Papanicolaou G. 1978
\newblock {\em Asymptotic analysis for periodic structures}.
\newblock Amsterdam, The Netherlands: North-Holland.

\bibitem{Zhikov94}
Zhikov VV, Kozlov SM, Oleinik OA. 1994
\newblock {\em Homogenization of differential operators and integral functions}.
\newblock Berlin, Germany: Springer.

\bibitem{Milton02}
Milton GW. 2002
\newblock {\em The theory of Composites}.
\newblock Cambridge, UK: Cambridge University Press.

\bibitem{Craster10}
Craster RV, Kaplunov J, Pichugin AV. 2010
\newblock High-frequency homogenization for periodic media.
\newblock {\em Proc. R. Soc. A} 466, 2341--2362.

\bibitem{Felbacq05}
Felbacq D, Bouchitte G. 2005
\newblock Theory of mesoscopic magnetism in photonic crystals.
\newblock {\em Phys. Rev. Lett.} 94, 183902.

\bibitem{Gralak03}
Gralak B, Dood M de, Tayeb G, Enoch S, Maystre D. 2003
\newblock Theoretical study of photonic band gaps in woodpile crystals.
\newblock {\em Phys. Rev. E} 67, 066601

\bibitem{Pendry04}
Pendry JB. 2004
\newblock A chiral route to negative refraction.
\newblock {\em Science} 306, 1353--1355.

\bibitem{Brekhov80}
Brekhovskikh LM. 1980
\newblock {\em Waves in Layered Media}.
\newblock 2nd edn, Academic Press.

\bibitem{Born80}
Born M, Wolf E. 1980
\newblock {\em Principles of Optics}.
\newblock 6th edn, Pergamon Press.

\bibitem{Yariv84}
Yariv A, Yeh P. 1984
\newblock {\em Optical Waves in Crystals}.
\newblock 1st edn, Wiley Press.

\bibitem{Lakhtakia92}
Lakhtakia A. 1992
\newblock General schema for the brewster conditions.
\newblock {\em Optik} 90, 184--186.

\bibitem{ward96}
Ward AJ, Pendry JB. 1996
\newblock Refraction and geometry in maxwell's equations.
\newblock {\em Journal of Modern Optics} 43, 773--793.

\bibitem{Reed75}
Reed M, Simon B. 1975
\newblock {\em Methods of modern mathematical physics}.
\newblock vol 2, Academic Press.

\bibitem{Weiss62}
Weiss GH, Maradudin AA. 1962
\newblock The baker-hausdorff formula and a problem in crystal physics.
\newblock {\em J.Maths.Phys.} 3, 771--777.

\bibitem{Thornburg57}
Thornburg W. 1957
\newblock The form birefrignece of lamellar systems containing three or more components.
\newblock {\em J. Biophys. Biochem. Cytol.} 3, 413--419.

\bibitem{Raguin93}
Raguin DH, Morris GM. 1993
\newblock Analysis of antireflection-structured surfaces with continuous one-dimensional surface profiles.
\newblock {\em Appl. Opt.} 32, 2582--2598.

\bibitem{Guenneau00}
Guenneau S, Zolla F. 2000
\newblock Homogenization of three-dimensional finite photonic crystals.
\newblock {\em Progress In Electromagnetic Research} 27, 91--127.

\bibitem{Rytov56}
Rytov SM. 1956
\newblock Electromagnetic properties of a finely stratified medium.
\newblock {\em Sov. Phys. J.} 2, 466--475.

\bibitem{Yeh96}
Gu C, Yeh P. 1996
\newblock Form birefringence dispersion in periodic layered media.
\newblock {\em Opt. Lett.} 21, 504--506.

\bibitem{Landau84}
Landau LD, Lifshitz EM, Pitaevskii LP. 1984
\newblock {\em Electrodynamics of Continuous Media}.
\newblock vol.8, 2nd edn, Pergamon Press.

\bibitem{Agranovich06}
Agranovich VM, Gartstein YV. 2006
\newblock Spatial dispersion and negative refraction of light.
\newblock {\em Phys. Usp.} 49, 1029--1044.

\bibitem{Tretyakov07}
Tretyakov SA, Simovski CR, and Hudlicka M. 2007
\newblock Bianisotropic route to the realization and matching of backward-wave metamaterial slabs.
\newblock {\em Phys. Rev. B} 75, 153104.

\bibitem{Ramakrishna09}
Ramakrishna SA, Lakhtakia A. 2009
\newblock Spectral shifts in the properties of a periodic multilayered stack due to isotropic chiral layers.
\newblock {\em J. Opt. A: Pure Appl. Opt.} 11, 074001.

\bibitem{Lifante05}
Lifante G. 2005
\newblock Effective index method for modeling sub-wavelength two-dimensional periodic structures.
\newblock {\em Physica Scripta.} T118, 72--77.

\bibitem{Boris08}
Pierre P, Gralak B. 2008
\newblock Appropriate trunction for photonic crystals.
\newblock {\em Journal of Modern Optics} 55, 1759--1770.

\bibitem{yan12}
Liu Y, Guenneau S, Gralak B. 2012
\newblock A route to all frequency homogenization of periodic structures.
\newblock {\em arXiv:1210.6171}.

\bibitem{Bakhalov89}
Bakhalov NS, Panasenko G. 1989
\newblock {\em Homogenization: Averaging Processes in Periodic Media}.
\newblock Kluwer Academic Publishers.

\bibitem{Bourgeat99}
Bourgeat A, Pianitski A. 1999
\newblock Estimate in probability of the residual between the random and the homogenized solutions of one-dimensional second-order operator.
\newblock {\em Asymptot. Anal.} 21, 303--315.

\bibitem{Keldysh88}
Keldysh LV. 1988
\newblock Excitons and polaritons in semiconductor/insulator quantum wells and
  superlattices.
\newblock {\em Superlattices and Microstructure} 4, 637--642.

\bibitem{Ivchenko91}
Ivchenko FL.
\newblock Excitonic polaritons in periodic quantum-well structures.
\newblock {\em Soviet Phys.-Solid State}, 33:1344--1349, 1991.

\bibitem{Deutsch95}
Deutsh IH, Spreeuw RJC, Rolston SL, Phillips WD. 1995
\newblock Photonic band gaps in optical lattices.
\newblock {\em Phys. Rev. A} 52, 1394--1410.

\bibitem{Lekner94}
Lekner J. 1994
\newblock Light in periodically stratified media.
\newblock {\em J. Opt. Soc. Am. A} 11, 2892--2899.

\bibitem{Abeles50}
Abeles F. 1950
\newblock Recherche sur la propagation des ondes elelctormagnetiques
  sinusoidales dans les milieux stratifies. applications aux couches minces.
\newblock {\em Ann. de Physique} 5, 596--640.

\bibitem{Sarason07}
Sarason D. 2007
\newblock {\em Complex function theory}.
\newblock 2nd edn, Amer. Math. Society.

\end{thebibliography}
\end{document}